\newtheorem{thm}{Theorem}
\newtheorem{lem}{Lemma}
\newtheorem{prop}{Proposition}
\theoremstyle{definition}
\newtheorem{defi}{Definition}
\renewcommand{\aa}[2]{\ensuremath{a_{#1}^{#2}}}
\newsavebox\myboxA
\newsavebox\myboxB
\newlength\mylenA
\newcommand*\xoverline[2][0.75]{%
    \sbox{\myboxA}{$\m@th#2$}%
    \setbox\myboxB\null
    \ht\myboxB=\ht\myboxA%
    \dp\myboxB=\dp\myboxA%
    \wd\myboxB=#1\wd\myboxA
    \sbox\myboxB{$\m@th\overline{\copy\myboxB}$}
    \setlength\mylenA{\the\wd\myboxA}
    \addtolength\mylenA{-\the\wd\myboxB}%
    \ifdim\wd\myboxB<\wd\myboxA%
       \rlap{\hskip 0.5\mylenA\usebox\myboxB}{\usebox\myboxA}%
    \else
        \hskip -0.5\mylenA\rlap{\usebox\myboxA}{\hskip 0.5\mylenA\usebox\myboxB}%
    \fi}
\begin{document}
\title{Gaussian decomposition of magic states for matchgate computations}
\author{Joshua Cudby}
\affiliation{DAMTP, Centre for Mathematical Sciences, University of Cambridge, Cambridge CB30WA, UK}
\author{Sergii Strelchuk}
\affiliation{DAMTP, Centre for Mathematical Sciences, University of Cambridge, Cambridge CB30WA, UK}
\affiliation{Warwick Quantum, Department of Computer Science, University of Warwick, UK}
\begin{abstract}
Magic states, pivotal for universal quantum computation via classically simulable Clifford gates, often undergo decomposition into resourceless stabilizer states, facilitating simulation through classical means. 
This approach yields three operationally significant metrics: stabilizer rank, fidelity, and extent. 
We extend these simulation methods to encompass matchgate circuits (MGCs), and define equivalent metrics for this setting.
We begin with an investigation into the algebraic constraints defining Gaussian states, marking the first explicit characterisation of these states. 
The explicit description of Gaussian states is pivotal to our methods for tackling all the simulation tasks.
Central to our inquiry is the concept of Gaussian rank -- a pivotal metric defining the minimum terms required for decomposing a quantum state into Gaussian constituents. This metric holds paramount significance in determining the runtime of rank-based simulations for MGCs featuring magic state inputs. The absence of low-rank decompositions presents a computational hurdle, thereby prompting a deeper examination of fermionic magic states.
We find that the Gaussian rank of 2 instances of our canonical magic state is 4 under symmetry-restricted decompositions. Additionally, our numerical analysis suggests the absence of low-rank decompositions for 2 or 3 copies of this magic state.
Further, we explore the Gaussian extent, a convex metric offering an upper bound on the rank. We prove the Gaussian extent's multiplicative behaviour on 4-qubit systems, along with initial strides towards proving its sub-multiplicative nature in general settings.
One important result in that direction we present is an upper bound on the Gaussian fidelity of generic states.
\end{abstract}
\maketitle
\tableofcontents
\section{Introduction}
The relation between quantum and classical computational power is one of the most intriguing questions in quantum information science. 
Finding the minimum extra resources that could lift classically efficiently simulable quantum systems to performing universal quantum computation has been a particularly fruitful avenue~\cite{jozsa2013classical}. 
It motivates the study of classical simulation algorithms for quantum systems which allows identification of the regimes where quantum computing does not offer an advantage~\cite{gottesman1998heisenberg,valiant2001quantum}.
For such systems, a lack of quantum advantage means that any quantum process may be simulated classically with at most a polynomial time overhead.

The development of classical simulation algorithms has many benefits beyond outlining the limits of classical computations. 
By simulating quantum systems classically, one can study and probe the properties of quantum systems in the absence of a full-scale quantum computer.
Furthermore, classical simulation is pivotal for the validation of quantum devices.
It provides a way to check the accuracy of noisy quantum computers and other experimental quantum systems; matching classical simulation results to experimental device outputs helps validate that the devices work correctly.
As quantum devices improve, the limits of classical simulation also provide a valuable benchmark that a quantum computer must outperform to surpass classical supercomputers.
Performing a task that would be prohibitively expensive on a classical computer, taking years or centuries say, would achieve the lofty goal of quantum supremacy~\cite{arute_quantum_2019, huang2020classical, kim_evidence_2023}.

One remarkable class of quantum computations with deep connections to fermionic linear optical systems is realised by matchgate circuits (MGCs)~\cite{Terhal2002, knill2001fermionic, jozsa2008matchgates,valiant2001quantum}. It is known that circuits composed solely of matchgates can be efficiently simulated classically~\cite{valiant2001quantum, Terhal2002, jozsa2008matchgates}. 
More precisely, MGCs consist of nearest-neighbour 2-qubit gates of the form:
\begin{equation}
    \begin{gathered}
    G(A, \, B) = \begin{pmatrix}
        p & 0 & 0 & q \\
        0 & w & x & 0 \\
        0 & y & z & 0 \\
        r & 0 & 0 & s
    \end{pmatrix}
    \qquad
    A = \begin{pmatrix}
        p & q \\
        r & s
    \end{pmatrix}
    \quad
    B = \begin{pmatrix}
        w & x \\
        y & z
    \end{pmatrix}  
    \end{gathered}
    \label{eq:matchgate}
\end{equation}
where $A,\, B \in U(2)$ are such that $\det(A) = \det(B)$, and nearest-neighbour refers to some fixed linear ordering of qubits.
One can consider $A$ (\textit{resp.} $B$) as acting on the even (odd) Hamming weight subspace of the two input qubits, with no mixing between the subspaces.
As such, MGCs preserve the parity of the Hamming weight of input states, leading to a partition of the set of outputs of MGCs into even and odd parts. 

We will refer to the output of MGCs acting on a fixed computational-basis input state with even (\emph{resp.} odd) Hamming weight as even (odd) Gaussian states. 
In the literature, these are often referred to as fermionic Gaussian states, to distinguish them from bosonic Gaussian states. Here, we consider only the fermionic setting and do not make the distinction explicit.

MGCs describe the computational ability of unassisted fermionic linear optics; that is, they correspond to a model of non-interacting fermions in 1D. \cite{Terhal2002}. 
As such, MGCs and Gaussian states are closely related to the theory of Majorana fermions, which we briefly introduce here.

In the standard theory of $n$ fermionic modes, we consider the creation and annihilation operators $a_i,\,a_i^\dag$ for $i = 1,\,\ldots,\,n$.
These satisfy the standard anti-commutation relations:
\begin{align}
    \{a_i,\,a_j\} = \{a_i^\dag,\,a_j^\dag\} = 0 \quad \{a_i,\,a_j^\dag\} = \delta_{ij} I
\end{align}

One can alternatively consider the Majorana fermions $c_i$ for $i = 1,\,\ldots,\,2n$. These are related via
\begin{align}
    a_i &= \frac{1}{2}(c_{2i - 1} + i c_{2i}) \\
    a_i^\dag &= \frac{1}{2}(c_{2i - 1} - i c_{2i})
\end{align}
One can easily check that the Majorana fermions obey the anti-commutation relations:
\begin{equation}
    \{c_i,\,c_j\} = 2 \delta_{ij} I \label{eq:clifford_alg}
\end{equation}
\autoref{eq:clifford_alg} defines a Clifford algebra on $2n$ generators.
One possible representation of the algebra is the Jordan-Wigner representation~\cite{jordan_uber_1928}, under which the Majorana fermions are mapped to Pauli operators on $n$ qubits:
\begin{equation}
    c_{2k - 1} = \left(\prod_{i = 1}^{k - 1} Z_i\right) X_k \quad
    c_{2k} = \left(\prod_{i = 1}^{k - 1} Z_i\right) Y_k
\end{equation}

The link between Gaussian states and Majorana fermions is most clearly seen in the following characterisation of Gaussian states. One can show that a state $\ket{\psi}$ on $n$ qubits is Gaussian if and only if~\cite{Bravyi2004}:
\begin{equation}
    \Lambda \ket{\psi}\ket{\psi} = \sum_{k = 1}^{2n} c_k \otimes c_k \ket{\psi}\ket{\psi} = 0  \label{eq:gaussian_state_eqn}
\end{equation}
This has historically been the primary operational description of Gaussian states.

While this description suffices for MGCs acting on computational basis inputs, it is well known that a description of a gate set is not enough to specify the power of quantum processes.
The presence of other resources in a quantum process can drastically impact computational power. 
One such resource is a broader class of allowed input states, with so-called \textit{magic states} being those inputs that elevate a process to full universal quantum computation \cite{Bravyi2005}. 

It has been shown that every pure fermionic state that is non-Gaussian (that is, cannot be generated by an MGC from a computational basis state), is a magic state for MGCs~\cite{Hebenstreit2019}. The notion of magic states for MGCs turns out to be rather more nuanced when contrasted with another well-known class of quantum computations -- Clifford computations -- and their associated magic states~\cite{Bravyi2019, Labib_2022, lovitz2022new, mehraban2023quadratic, Peleg_2022}. This is because the locality of interaction plays a significant role for MGCs~\cite{jozsa2008matchgates, brod2011extending}.

Classically simulating Clifford circuits or MGCs with a large number of magic state inputs is inefficient in general. 
One simulation technique involves decomposing the magic state inputs into sums of resourceless states,  each of which is obtainable by the corresponding gate sets~\cite{Bravyi_2016, Bravyi_2016_improved, Bravyi2019}.
The quantum process on any term in this sum is classically simulable by the celebrated Gottesman-Knill theorem~\cite{gottesman1998heisenberg} or the results of Valiant~\cite{valiant2001quantum} respectively for the two settings. 
By linearity, the output of the circuit can be reconstructed from the classical simulation of each term in the sum. 
Simulation time can be dramatically reduced if one can find a decomposition of these magic states into a smaller number of terms. 

A key quantity that determines the run time of the above class of algorithms for the case of Clifford circuits is the stabilizer rank~\cite{Bravyi_2016}, which is the minimal number of terms needed in a decomposition into stabilizer states. 
Bravyi et al~\cite{Bravyi2019} also define the associated stabilizer extent $\xi$ and fidelity $F_C$ as follows:
\begin{align}
    \xi(\ket{\Phi}) &= \min \norm{c}_1^2 \quad \text{s.t. } 
    \sum_{i = 1}^k c_i \ket{\psi_i} = \ket{\Phi} 
    \\
    F_C(\ket{\Phi})& = \max_i \abs{\braket{\psi_i}{\Phi}}^2
\end{align}
where the $\ket{\psi_i}$ are stabilizer states. It is known that the stabilizer extent is not multiplicative \cite{Heimendahl2021}, and as part of the proof it was shown that generic Haar-random states have exponentially low fidelity.

Our results are organised as follows. 
In~\autoref{sec:notation}, we define our notation and some preliminaries.
In~\autoref{sec:properties}, we give an explicit description of Gaussian states and use this characterisation to prove results on the dimension of the Gaussian manifold and on sums of 2 Gaussian states. 
In~\autoref{sec:fidelity}, we construct an $\epsilon$-net for the Gaussian manifold and use it to upper bound the Gaussian fidelity of generic Haar-random states. 
In~\autoref{sec:extent}, we discuss the multiplicativity of the Gaussian extent~\footnote{While preparing this manuscript we became aware of related results on Gaussian extent by  Oliver Reardon-Smith, Kamil Korzekwa, Michal Oszmaniec (TQC 2023, \cite{korzekwa2023simulation}) and independently by Robert Koenig, Beatriz Cardoso Diaz \cite{koenig2023simulation}}. 
In~\autoref{sec:rank} we discuss our numerical attempt to find low-rank Gaussian decompositions for 2 or 3 copies of a certain Gaussian magic state. 
We also give a proof that no such decomposition exists for 2 copies in a symmetry-reduced case.
\section{Notation and Preliminaries} \label{sec:notation}
We use $\Re$ to denote the real part of an expression.

We will generally make no distinction between an integer $x$ and its binary representation. 
Bitwise XOR of binary vectors will be notated as $\oplus$.
The Hamming weight of a binary string is $\abs{\cdot}$ and the Hamming distance is $d(\cdot,\,\cdot)$.

Let $D(x,\,y) \coloneqq \{ k_i : x_{k_i} \neq y_{k_i}; \ i = 1,\,\ldots,\,d(x,\,y)\}$, where the $k_i$ are written in increasing order.

We refer to the substring of a binary string, running from indices $i$ to $j$ inclusive, as $x_{1:j}$. 

We use an overline to denote changing bits of a binary string: $\overline{x}^i = x \oplus e_i$ and $\overline{x}^{i,\,j} = x \oplus e_i \oplus e_j$.

We denote the set of all binary strings of length $n$ by $\mathcal{B}_n$ and the subset of even-weight strings by $\mathcal{A}_n$. The odd-weight strings are then the complement of this set, denoted $\mathcal{A}_n^\mathsf{c}$.

We consider circuits of gates on a linear ordering of qubits. 
A matchgate is a 2-qubit gate acting on nearest neighbours, whose matrix is given by~\autoref{eq:matchgate}.
A matchgate circuit (MGC) is a circuit composed entirely of matchgates.
The set of Gaussian operations, i.e. those operations corresponding to matchgate circuits, is $\mathcal{G}_n \subset \mathcal{U}_n$.

We say that a state $\ket{\psi}$ is of (definite) even parity if it may be written in the computational basis as a sum over only labels with even Hamming weight: $\ket{\psi} = \sum_{x \in \mathcal{A}_n} a_x \ket{x}$. Similarly, an odd parity state may be written $\ket{\psi} = \sum_{x \in \mathcal{A}_n^\textsf{c}} a_x \ket{x}$.

We refer to the Hilbert space of states with even parity by $\mathcal{H}_n$.

An even Gaussian state on $n$ qubits is any state that can be obtained from an MGC acting on the input state $\ket{0}^{\otimes n}$.
Since MGCs preserve the parity of inputs, these states are of definite even parity.
We denote the set of even Gaussian states on $n$ qubits by $G_n$: 
$$G_n \coloneqq \{ \ket{s} = U\ket{0}^{\otimes n} : U \in \mathcal{G}_n\}$$
We will drop the subscript when it is obvious by context. 
\section{Properties of Gaussian states} \label{sec:properties}

\autoref{eq:gaussian_state_eqn} provides a complete characterisation of Gaussian states; however, it is not easy to work with in practice.
Starting \textit{a priori} from this equation, it is not immediately clear how the amplitudes of a Gaussian state are constrained and interlinked.

We derive an explicit representation of Gaussian states by finding an independent set of the constraints imposed by~\autoref{eq:gaussian_state_eqn}.
This representation is vital for the derivation of the rest of our results.

\subsection{Independent constraints defining Gaussian states}
Recall the definition of a Gaussian state:
\begin{equation}
    \Lambda \ket{\psi}\ket{\psi} = \sum_{k = 1}^{2n} c_k \otimes c_k \ket{\psi}\ket{\psi} = 0  \label{eq:gaussian_state_eqn_2}
\end{equation}
where the $c_i$ are given by:
\begin{equation}
    c_{2k - 1} = \left(\prod_{i = 1}^{k - 1} Z_i\right) X_k \quad
    c_{2k} = \left(\prod_{i = 1}^{k - 1} Z_i\right) Y_k
\end{equation}
On a computational basis state $\ket{x}$, we have:
\begin{align}
    c_{2k-1}\ket{x} &= (-1)^{\abs{x_{1:k}}} \ket{\overline{x}^k} \\
    c_{2k}\ket{x} &= i(-1)^{\abs{x_{1:k}}} (-1)^{x_k} \ket{\overline{x}^k}
\end{align}
In both cases, we get a sign factor from the string of $Z_i$, $i = 1,\,\ldots,\,k-1$, and we flip the $k$th bit. For $c_{2k}$, we also pick up a factor of $i$ and an additional $(-1)^{x_k}$.

Let $\ket{\psi} = \sum_{x \in \mathcal{A}_n} a_x \ket{x}$ be an even-parity state.
From~\autoref{eq:gaussian_state_eqn_2}, we can see it is Gaussian if and only if:
\begin{align}
    \Lambda \ket{\psi}\ket{\psi} &= 
    \sum_{k = 1}^{2n} \bigl(c_k \otimes c_k \bigr) \sum_{x, y \in \mathcal{A}_n} a_x a_y \ket{x} \ket{y} \nonumber \\
    &= \sum_{x, y \in \mathcal{A}_n} a_x a_y \sum_{k = 1}^{2n} \bigl(c_k \otimes c_k \bigr)\ket{x} \ket{y} \nonumber \\
    &= \sum_{x, y \in \mathcal{A}_n} \left( a_x a_y \sum_{k = 1}^n \biggl( (-1)^{\abs{x_{1:k-1}} + \abs{y_{1:k-1}}} \bigl(1 + (i^2)(-1)^{x_k + y_k} \bigr)  \ket{\xoverline{x}^k}\ket{\xoverline{y}^k} \biggr) \right) \nonumber\\
    &= \sum_{x, y \in \mathcal{A}_n^\mathsf{c}} \left( \sum_{k = 1}^n \biggl( (-1)^{\abs{x_{1:k-1}} + \abs{y_{1:k-1}}} 
    \bigl(1 - (-1)^{x_k + y_k} \bigr) a_{\overline{x}^k} a_{\overline{y}^k} \biggr) \ket{x}\ket{y} \right) \label{eq:expanded_lambda_eq} = 0 
\end{align}

Fix $x,\,y \in {\mathcal{A}_n^\mathsf{c}}^{\otimes 2}$, and consider the inner sum of~\autoref{eq:expanded_lambda_eq}:
\begin{equation}
    \sum_{k = 1}^n (-1)^{\abs{x_{1:k-1}} + \abs{y_{1:k-1}}} 
    \bigl(1 - (-1)^{x_k + y_k} \bigr) a_{\overline{x}^k} a_{\overline{y}^k} \label{eq:inner_sum}
\end{equation}
If $x_k = y_k$, then the summand in~\autoref{eq:inner_sum} vanishes. So we may restrict the sum to only indices $k$ where $x_k \neq y_k$, i.e. to the set $D(x,\,y)$.

We now show that terms that survive in~\autoref{eq:inner_sum} alternate sign.
Let $k,\,k'$ be a pair of consecutive indices in $D(x,\,y)$. 
Then we may consider the relative sign between the terms corresponding to $k,\,k'$, which is given by:
$$
(-1)^{\abs{x_{1:k'-1}} + \abs{y_{1:k'-1}} - \abs{x_{1:k-1}} - \abs{y_{1:k-1}}} 
= (-1)^{\abs{x_{k:k'-1}} + \abs{y_{k:k'-1}}}
$$
Since, by construction, $k+1,\,\ldots,\,k'-1 \notin D(x,\,y)$, we have $x = y$ on this substring.
These bits therefore contribute no phase.
Then, since $x_k \neq y_k$, we have that the relative phase is $-1$. 

We may therefore simplify the form of~\autoref{eq:inner_sum}.
For any $x,\,y \in {\mathcal{A}_n^\mathsf{c}}^{\otimes 2}$, we have 
\begin{equation}
    \sum_{i = 1}^{\abs{D(x,\,y)}} (-1)^{i+1} a_{\xoverline{x}^{k_i}}a_{\xoverline{y}^{k_i}}
     \label{eq:constraint}
\end{equation}
It will be useful to write these equations in terms of labels with even parity rather than odd.
To do this, we treat the first term of the sum in~\autoref{eq:constraint} as preferential. 
Let $u = \overline{x}^{k_1}$ and $v = \overline{y}^{k_1}$, which have even weight.
Then we may re-write~\autoref{eq:constraint} as:
\begin{equation} \label{eq:f_uv}
    f(u,\,v) \coloneqq a_u a_v - \sum_{i = 2}^{\abs{D(u,\,v)}} (-1)^{i} a_{\overline{u}^{k_1,\,k_i}}a_{\overline{v}^{k_1,\,k_i}}
\end{equation}
We will think of the equation $f(u,\,v) = 0$ as the constraint associated with labels $u,\,v \in \mathcal{A}_n$.
Note the abuse of notation to avoid excessive sub- or super-scripts in the definition of the constraints.
For any $u,\,v \in \mathcal{A}_n$, $f(u,\,v)$ is defined on the set of amplitudes $\{a_x : x \in \mathcal{A}_n \}$ of an even parity state $\ket{\psi}$. Which state the function $f(u,\,v)$ is defined on will always be clear from context.

Note that if $d(u,\,v) = 2$ then the two terms on the RHS of~\autoref{eq:f_uv} cancel and the constraint $f(u,\,v) = 0$ is trivial.
$f(u,\,u) = 0$ is trivial also, since the corresponding sum in~\autoref{eq:constraint} has no terms. 

Therefore, \autoref{eq:expanded_lambda_eq} is true if and only if the following set of constraints is satisfied:
\begin{equation}
    \mathcal{F} = \left\{
     f(x,\,y) = 0
    : (x,\,y) \in \mathcal{A}_n^{\otimes 2}
    ;\, d(x,\,y) \geq 4
    \right\}
\end{equation}
For example, for $n = 4$, the only non-trivial constraint in $\mathcal{F}$ (up to permutation of terms) is
\begin{equation}
    f(0000,\,1111) \equiv f(0,\, 15) = a_0 a_{15} - a_3 a_{12} + a_5 a_{10} - a_6 a_9 = 0
\label{eq:f_0_15}
\end{equation}

Ignoring duplication, $\mathcal{F}$ has $2^{n-1}\left( 
2^{n-1} - \tfrac{n(n-1)}{2} -1
\right)$ equations; not all of them are independent.
To find an independent set of equations, we will use a complexified form of the Implicit Function Theorem:
\begin{thm}[Implicit Function Theorem, see e.g. \cite{Hormander1973}] \label{thm:ift}
    Let $\left\{g_j(w, z) : j = 1, 2, \dots, m \right\}$ be analytic functions of $(w, z) = (w_1, \dots, w_m, z_1, \dots, z_n)$ in a neighbourhood of $(w_0, z_0) \in \mathbb{C}^m \cross \mathbb{C}^n$.
    Suppose $g_j(w_0, z_0) = 0 \ \forall j$ and also that 
    $J = \det \left(\pdv{g_j}{w_k}\right)_{j, k = 1}^{m} \neq 0 \ \text{at} \ (w_0, z_0)$.
    Then, in a neighbourhood of $z_0$, $\{g_j(w, z)\}$ has a unique analytic solution such that $w(z_0) = w_0$, given implicitly by $(w,\,z) = (w(z),\,z)$.
\end{thm}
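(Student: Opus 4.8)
The plan is to deduce the complex, holomorphic statement from the classical real-variable Implicit Function Theorem and then to upgrade the resulting smooth solution to a holomorphic one. Writing $w_k = \alpha_k + i\beta_k$, $z_l = \gamma_l + i\delta_l$ and $g_j = \Re g_j + i\,\Im g_j$, each $g_j$ becomes a pair of real functions of the $2m+2n$ real coordinates $(\alpha,\beta,\gamma,\delta)$. Since the $g_j$ are analytic --- in particular holomorphic, hence $C^\infty$ and even real-analytic --- in a neighbourhood of $(w_0,z_0)$, these real functions are real-analytic near the corresponding real point.

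First I would record the elementary linear-algebra fact that the $2m \times 2m$ real Jacobian of $(\Re g_j,\Im g_j)_j$ with respect to $(\alpha_k,\beta_k)_k$ has determinant $\lvert J\rvert^2$, where $J = \det\!\left(\pdv{g_j}{w_k}\right)$ is the complex Jacobian; this uses the Cauchy--Riemann equations for the $g_j$ in the $w$ variables. Hence the hypothesis $J \neq 0$ at $(w_0,z_0)$ makes the real Jacobian invertible there, and the real Implicit Function Theorem produces a unique real-analytic map $(\gamma,\delta)\mapsto(\alpha,\beta)$, equivalently a real-analytic $w = w(z)$, defined near $z_0$ with $w(z_0)=w_0$ and $g_j(w(z),z)=0$ for all $j$.

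It then remains only to check that $w$ depends holomorphically on $z$. I would differentiate the identity $g_j(w(z),z)\equiv 0$ with respect to $\bar z_l$ by the chain rule; because each $g_j$ is holomorphic we have $\pdv{g_j}{\bar z_l}=0$ and $\pdv{g_j}{\bar w_k}=0$, so $\sum_k \pdv{g_j}{w_k}\pdv{w_k}{\bar z_l}=0$. Since the matrix $\left(\pdv{g_j}{w_k}\right)$ stays invertible throughout a neighbourhood of $(w_0,z_0)$ by continuity of the determinant, this forces $\pdv{w_k}{\bar z_l}=0$ for every $k,l$, i.e. $w$ satisfies the Cauchy--Riemann equations and is holomorphic; analyticity is then automatic in the several-complex-variables setting, and uniqueness among continuous (hence among analytic) solutions is inherited from the real theorem.

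A cleaner self-contained alternative, staying entirely within the complex setting, is a uniform contraction argument: with $A = \left(\pdv{g_j}{w_k}\right)(w_0,z_0)$, the Newton-type map $T_z(w) = w - A^{-1} g(w,z)$ becomes a contraction on a fixed small polydisc in $w$, uniformly for $z$ in a small polydisc about $z_0$, after suitably shrinking both; its fixed point $w(z)$ is the unique local solution, and it is holomorphic as the locally uniform limit of the holomorphic Picard iterates $w_0, T_z(w_0), T_z^{2}(w_0),\dots$ (Weierstrass's convergence theorem). I expect the only genuine subtlety in either route to be bookkeeping rather than substance --- in the first, the determinant identity $\det_{\mathbb{R}} = \lvert J\rvert^2$ and the need to treat the parameters $z$ as $2n$ real variables before recovering holomorphy in $z$; in the second, arranging the contraction estimate so that it holds uniformly in the parameter $z$.
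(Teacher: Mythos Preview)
Your proposal is a correct and standard outline of how to prove the holomorphic Implicit Function Theorem: reduce to the real $C^\infty$ (indeed real-analytic) version via the determinant identity $\det_{\mathbb{R}} = |J|^2$ coming from the Cauchy--Riemann equations, then recover holomorphy of the implicit solution by differentiating $g(w(z),z)\equiv 0$ with respect to $\bar z_l$. The alternative Newton--Picard contraction argument you sketch is also a well-known self-contained route.

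However, there is nothing to compare against here: the paper does not prove Theorem~\ref{thm:ift}. It is stated as a classical result, with a reference to H\"ormander~\cite{Hormander1973}, and is then \emph{applied} in the proof of Proposition~\ref{prop:constraints} to the system of quadratic constraints $f(0,w)$ defining Gaussian states. The paper's contribution at that point is to exhibit the block-upper-triangular structure of the Jacobian $\left(\pdv{f(0,w)}{a_v}\right)$ so that its determinant is visibly a power of $a_0 \neq 0$; the Implicit Function Theorem itself is taken as a black box. So your write-up is fine as a proof of a background fact, but it is not a proof the paper undertakes or needs.
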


This leads us to our first result:
\begin{prop} \label{prop:constraints}
    An even-parity Gaussian state on $n$ qubits is defined by a set of $2^n - \tfrac{n(n-1)}{2}$ algebraic constraints.
    These may be chosen as $\mathcal{F}^y_{\text{indep}} \coloneqq \{f(y,\,w) = 0 : d(y,\,w) \geq 4\}$, where $f$ is as defined above, along with a normalisation and global phase equivalence condition.
\end{prop}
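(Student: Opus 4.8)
The plan is to apply the complexified Implicit Function Theorem (Theorem~\ref{thm:ift}) to the reduced system $\mathcal{F}^y_{\text{indep}}$ at a Gaussian base point and then to argue that the local solution manifold it produces is exactly the Gaussian cone, so that normalisation and phase-fixing cut it down to $G_n$. Fix any $y\in\mathcal{A}_n$; for readability we take $y=0$, the general case being identical after replacing ``Hamming weight'' by ``Hamming distance from $y$'' throughout. Then $D(0,w)$ is the support of $w$ and the constraint $f(0,w)=0$ reads $a_0 a_w=\sum_{i=2}^{\abs{w}}(-1)^i a_{e_{k_1}\oplus e_{k_i}} a_{w\oplus e_{k_1}\oplus e_{k_i}}$, where $k_1<\dots<k_{\abs{w}}$ enumerate the support of $w$. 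The crucial point is that every amplitude appearing on the right has Hamming weight $2$ or $\abs{w}-2$, both strictly less than $\abs{w}$ when $\abs{w}\geq4$. Ordering the constraints by the weight of $w$ therefore makes $\mathcal{F}^y_{\text{indep}}$ triangular: one may take $a_0$ and the $\tfrac{n(n-1)}{2}$ weight-two amplitudes as free parameters, and each amplitude of weight $\geq4$ is then a polynomial in the free amplitudes divided by a power of $a_0$.

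For the Implicit Function Theorem I would let the dependent variables be $\{a_x:\abs{x}\geq4\}$ and the free variables be $\{a_0\}\cup\{a_x:\abs{x}=2\}$; there are $m=2^{n-1}-1-\tfrac{n(n-1)}{2}$ of the former, and equally many constraints in $\mathcal{F}^y_{\text{indep}}$. By the triangular structure, the Jacobian with rows and columns indexed by the weight-$\geq4$ labels taken in weight order is triangular with $a_0$ along the diagonal, so its determinant equals $a_0^{\,m}$, nonzero at any base point with $a_0\neq0$ — e.g.\ at $\ket{0}^{\otimes n}$ itself. Theorem~\ref{thm:ift} then furnishes a unique analytic solution in a neighbourhood of that point, so the zero locus of $\mathcal{F}^y_{\text{indep}}$ is there a smooth manifold of complex dimension $2^{n-1}-m=1+\tfrac{n(n-1)}{2}$. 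This already shows the constraints of $\mathcal{F}^y_{\text{indep}}$ are independent; adjoining the $2^{n-1}$ parity equations and the normalisation gives the stated total $2^n-\tfrac{n(n-1)}{2}$ of algebraic constraints (the global phase being a further equivalence), and what remains is a set of real dimension $n(n-1)$, that of the projective Gaussian manifold.

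It remains to identify this solution manifold with $G_n$. One inclusion is immediate from the derivation of $\mathcal{F}$ itself: a Gaussian state has even parity and satisfies every $f(x,y)=0$, hence in particular all of $\mathcal{F}^y_{\text{indep}}$. For the converse I would use that $G_n$ is the orbit of $\ket{0}^{\otimes n}$ under the matchgate group, hence (away from the origin) a smooth, irreducible variety whose cone has complex dimension $1+\tfrac{n(n-1)}{2}$, as read off from the dimension of $\mathcal{G}_n/\mathrm{Stab}(\ket{0}^{\otimes n})$. Running Theorem~\ref{thm:ift} at any Gaussian point $\ket{s}$ with $a_y(\ket{s})\neq0$ shows the zero locus of $\mathcal{F}^y_{\text{indep}}$ is, near $\ket{s}$, a smooth manifold of complex dimension $1+\tfrac{n(n-1)}{2}$; since it contains the closed, equidimensional $G_n$, the latter is both open and closed in it near $\ket{s}$, so the two agree there. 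Because the zero locus of $\mathcal{F}^y_{\text{indep}}$ intersected with $\{a_y\neq0\}$ is connected — it is the graph of the triangular parametrisation over $\mathbb{C}^{*}\times\mathbb{C}^{n(n-1)/2}$ — it equals $G_n\cap\{a_y\neq0\}$; the residual locus $\{a_y=0\}$ is covered by re-running the argument with a different reference label $y'$ and using that $G_n$ is Zariski-closed. Restricting to normalised, phase-fixed states yields the proposition.

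I expect the genuine obstacle to be this last identification — promoting ``contained in, with equal dimension'' to ``equal'', i.e.\ excluding non-Gaussian solutions of $\mathcal{F}^y_{\text{indep}}$ — together with the bookkeeping needed to patch the loci $\{a_y=0\}$ over varying $y$. A cleaner alternative that bypasses the Implicit Function Theorem is to establish directly an explicit identity writing each $f(x,w)$ with $d(x,w)\geq4$ as a polynomial combination of the relations in $\mathcal{F}^y_{\text{indep}}$ — a Plücker-type computation among the quadrics $f$ — which gives $\mathcal{F}^y_{\text{indep}}\Rightarrow\mathcal{F}$ at once, at the price of a more intricate combinatorial identity. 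In either route the only genuinely computational ingredient is the weight-ordered triangular form of $\mathcal{F}^y_{\text{indep}}$, from which the non-vanishing of the Jacobian $J$ in Theorem~\ref{thm:ift} is immediate.
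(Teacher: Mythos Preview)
Your core argument---applying Theorem~\ref{thm:ift} with a triangular Jacobian whose diagonal entries are all $a_0$---is precisely the paper's approach, and your counting matches. The only cosmetic difference is in how the triangular structure is exhibited: you order the dependent labels by Hamming weight and observe that the right-hand side of $f(0,w)=0$ involves only amplitudes of weight $2$ and $\abs{w}-2$, giving a block-triangular Jacobian with $a_0 I$ on the diagonal; the paper instead partitions $W=\{x:\abs{x}\geq4\}$ by the position of the first set bit, $W_k=\{w\in W:w_{1:k-1}=0,\,w_k=1\}$, and checks directly that $\partial f(0,w)/\partial a_v=a_0\delta_{wv}$ for $v\in W_k$ and vanishes for $v\in W_l$, $l<k$. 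Both orderings give the same nonzero determinant.

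Where you diverge is after the IFT step. The paper stops once the Jacobian is seen to be nonsingular: it records that $\mathcal{F}^0_{\text{indep}}$ has a unique analytic solution parametrised by $\{a_z:\abs{z}\leq2\}$, counts $\abs{Z}=\binom{n}{2}+1$, and declares the constraints independent. It does \emph{not} argue the converse---that every solution of $\mathcal{F}^y_{\text{indep}}$ satisfies the full system $\mathcal{F}$---which you correctly identify as the substantive remaining point. Your dimension-matching/connectedness argument (the zero locus of $\mathcal{F}^y_{\text{indep}}$ on $\{a_y\neq0\}$ is the graph over $\mathbb{C}^*\times\mathbb{C}^{\binom{n}{2}}$, hence smooth and connected of the same complex dimension as the Gaussian cone, and contains it) is a sound way to close this gap, provided one imports the dimension of $G_n$ from the group action. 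In that sense your proposal is more complete than the paper's own proof; the paper takes the sufficiency of $\mathcal{F}^y_{\text{indep}}$ as understood.
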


\begin{proof}
    Assume without loss of generality that $a_0 \neq 0$; the proof easily generalises to any other $y \in \mathcal{A}_n$ such that $a_y \neq 0$.
    
    Consider the following partition of $\mathcal{A}_n$:
    \begin{equation*}
        Z = \left\{ x \in \mathcal{A}_n : \abs{x} \leq 2 \right\} \quad 
        W = \left\{ x \in \mathcal{A}_n : \abs{x} \geq 4 \right\}
    \end{equation*}
    Define a further partition of $W$ based on the first `set' bit:
    \begin{equation*}
        W_k = \left\{
            w \in W : \ w_{1:k-1} = 0 ; \ w_k = 1
        \right\} \quad k = 1,2,\ldots n-3
    \end{equation*}    
    We pick the following set of functions, and show that they fulfil the conditions of the Implicit Function Theorem:
    \begin{equation}
        \bigcup_{k = 1}^{n-3} \left\{
            f(0,\,w) : w \in W_k
        \right\}
        \equiv \{ f(0,\,w) : \abs{w} \geq 4\}
    \end{equation}
    
    Note that, for $w \in W_k$, we have:
    \begin{equation} \label{eq:jacobian_elements}
        \pdv{f(0,\,w)}{a_v} = \begin{cases}
            a_0 \delta_{wv} &\quad v \in W_k \\
            0 &\quad v \in W_l,\, l < k \\
            {\vcenter{\hbox{\tiny$\bullet$}}} &\quad v \in W_l,\, l > k \\
        \end{cases}
    \end{equation}
    
  \autoref{eq:jacobian_elements} gives us that the Jacobian is of block-upper-triangular form, with multiples of the identity on the diagonal. We need not consider the form of $\pdv{f(0,\,w)}{a_v}$ for $v \in W_l,\, l > k$ since block-upper-triangular is sufficient for our determinant calculation.
    
    Letting $m_k = \abs{W_k}$, we have: 
    \begin{equation}
        J = 
        \left(\begin{array}{c|c|c|c}
          a_0 I_{m_1}  & {\vcenter{\hbox{\tiny$\bullet$}}} & {\vcenter{\hbox{\tiny$\bullet$}}} & \dots \\
        \hline
          0 & a_0 I_{m_2} & {\vcenter{\hbox{\tiny$\bullet$}}}  & \dots \\
        \hline
          0 & 0 & a_0 I_{m_3} & \dots \\
        \hline
          \vdots & \vdots & \vdots & \ddots
        \end{array}\right)
    \end{equation}
    
    The determinant of $J$ is therefore just the product of the determinants of the diagonal blocks and is non-zero, since $a_0 \neq 0$ by assumption.

    We now have a set of functions, a partition of variables and a non-zero determinant, as required by \autoref{thm:ift}.
    
    \autoref{thm:ift} therefore implies that there is a unique analytic solution to the set of constraints $\mathcal{F}_{\text{indep}}^0 \coloneqq \{f(0,\,w) = 0 : \abs{w} \geq 4\}$. 
    For this solution, for any $w \in W$, we have an implicit expression $a_w = a_w(\{a_z : z \in Z\})$.
    
    In words, the amplitudes with labels $w,\,\abs{w} \geq 4$ are functions of the amplitudes with labels $z,\,\abs{z} \leq 2$.
    
    Since $\abs{Z} = \binom{n}{2} + 1$, we have an independent set of $2^n - \tfrac{n(n-1)}{2}$ constraints as claimed.

    The proof generalises to any favoured $y \in \mathcal{A}_n$ with $a_y \neq 0$ by replacing the bit string 0 by $y$ and the Hamming weight $\abs{\cdot}$ by $d(y,\,\cdot)$ throughout.

    The corresponding independent set of constraints which define Gaussian states is then:
    \begin{equation} \label{eq:indep_constraints_y}
        \mathcal{F}_{\text{indep}}^y = \left\{ f(y,\,w) = a_y a_w - \sum_{i = 2}^{|D(w,\,y)|}(-1)^i a_{\overline{y}^{k_1,k_i} } a_{\overline{w}^{k_1,k_i} } = 0 : d(y,\,w) \geq 4 \right\}
    \end{equation}
\end{proof}
\subsection{Triples of Gaussian states}
The explicit representation above is useful for computing further properties of Gaussian states. For example, we can quantify the conditions for 2 Gaussians to sum to a third:
\begin{prop}
    Suppose $\ket{\psi_0} = \alpha \ket{\psi_1} + \beta \ket{\psi_2}$ where the $\ket{\psi_i}$ are normalised Gaussian states and without loss of generality take $\alpha,\, \beta \in \mathbb{R}$. Then the triple can be expressed as
    \begin{equation}
         \{\ket{\psi_0},\,  \ket{\psi_1}, \ket{\psi_2} \} = \left\{
        U\ket{0}, \, 
        \frac{1}{\alpha} U\left(
        \sum_{\substack{y \in \mathcal{A}_n :\\ \abs{y} \leq 2}} a_y \ket{y}
        \right), \, 
        \frac{1}{\beta}U \left(
            (1-a_0)\ket{0} - \sum_{\substack{y \in \mathcal{A}_n :\\ \abs{y} = 2}} a_y \ket{y}
        \right) 
        \right\}
    \end{equation}
    for some Gaussian operation $U \in \mathcal{G}$. 
    Here, the $\{a_y : \abs{y} \leq 2 \}$ are some complex coefficients such that 
    $$\sum_{\substack{y \in \mathcal{A}_n :\\ \abs{y} \leq 2}} a_y \ket{y} 
    \quad \text{and} \quad
    (1-a_0)\ket{0} - \sum_{\substack{y \in \mathcal{A}_n :\\ \abs{y} = 2}} a_y \ket{y}$$ 
    are Gaussian states.
    
    Moreover, for any $\ket{\psi}$, the dimension of the manifold of Gaussians for which  $\alpha \ket{\psi} + \beta\ket{\psi'} \in G$ is $2n-3$.
\end{prop}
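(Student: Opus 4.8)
The proof splits along the two assertions. For the structural claim the plan is to use that $\mathcal{G}_n$ acts transitively on Gaussian states: choosing $U \in \mathcal{G}_n$ with $U\ket{0} = \ket{\psi_0}$ and applying $U^\dagger$ to all three states, one may assume $\ket{\psi_0} = \ket{0}$, with $\ket{\psi_1}, \ket{\psi_2}$ still Gaussian. Write $b_x$ for the amplitudes of $\alpha\ket{\psi_1}$, so that those of $\beta\ket{\psi_2}$ are $\delta_{x,0} - b_x$. Since $f(0,w) = 0$ holds for \emph{every} Gaussian state — it is one of the equations of the set $\mathcal{F}$ appearing before Proposition~\ref{prop:constraints} — and since $f(0,w)$ is homogeneous of degree $2$ in the amplitudes, it vanishes both on $(b_x)_x$ and on $(\delta_{x,0} - b_x)_x$. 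For $\abs{w} \geq 4$ every label occurring in $f(0,w)$ — namely $0$, $w$, the weight-$2$ strings $\overline{0}^{k_1,k_i}$, and the weight-$(\abs{w}-2)$ strings $\overline{w}^{k_1,k_i}$ — has nonzero Hamming weight, so in the second copy the substitution $\delta_{x,0} - b_x \mapsto -b_x$ is valid in every factor except the single factor $1 - b_0$; subtracting the first equation from the second makes all quadratic terms telescope away and leaves exactly $-b_w = 0$. Hence $\alpha\ket{\psi_1}$ is supported on Hamming weight $\leq 2$, i.e. $\alpha\ket{\psi_1} = \sum_{\abs{y}\leq 2} a_y \ket{y}$, and $\beta\ket{\psi_2} = \ket{0} - \alpha\ket{\psi_1} = (1 - a_0)\ket{0} - \sum_{\abs{y}=2} a_y\ket{y}$; reapplying $U$ gives the stated form, and the requirement that these two vectors be Gaussian is precisely the constraint on the $a_y$.

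For the dimension claim, fix a Gaussian state $\ket{\psi}$ (the variant where the combination $\ket{\psi_0}$ is held fixed is handled identically and is in fact immediate from the structural claim) and rotate by a Gaussian operation so that $\ket{\psi} = \ket{0}$. The plan is to identify the manifold of Gaussian $\ket{\psi'}$ admitting $\alpha, \beta \neq 0$ with $\alpha\ket{0} + \beta\ket{\psi'} \in G$ distinct from both summands, with the manifold of Gaussian states supported on Hamming weight $\leq 2$. One inclusion is immediate: for a weight-$\leq 2$ Gaussian $\ket{\psi'}$, the vector $\alpha\ket{0} + \beta\ket{\psi'}$ is again supported on weight $\leq 2$ and its weight-$2$ amplitude matrix is a scalar multiple of that of $\ket{\psi'}$, hence Gaussian. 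For the converse, given $\ket{\psi_0} := \alpha\ket{0} + \beta\ket{\psi'} \in G$, rewrite the relation as $\ket{0} = \tfrac{1}{\alpha}\ket{\psi_0} - \tfrac{\beta}{\alpha}\ket{\psi'}$, which exhibits the Gaussian state $\ket{0}$ as a linear combination of the two Gaussian states $\ket{\psi_0}$ and $\ket{\psi'}$; applying the structural claim to this relation produces a Gaussian operation $U$ with $U\ket{0} = \ket{0}$ and with $\ket{\psi'}$ equal to $U$ applied to a weight-$\leq 2$ vector. Since a Gaussian operation fixing the vacuum preserves the Hamming-weight grading (it is number-conserving), $\ket{\psi'}$ is itself supported on weight $\leq 2$.

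It then remains to compute the dimension of the manifold of weight-$\leq 2$ Gaussian states. Such a state with nonzero $\ket{0}$-amplitude has the form $a_0\bigl(\ket{0} + \sum_{i<j} m_{ij}\ket{e_i+e_j}\bigr)$ for an antisymmetric $n \times n$ matrix $m$ with $m_{ij} = a_{ij}/a_0$, and with this identification the recursion $f(0,w) = 0$ of Proposition~\ref{prop:constraints} is, up to the standard sign convention, the first-index Laplace expansion of the Pfaffian, so by induction the weight-$2k$ amplitude of a Gaussian state equals $\operatorname{Pf}\bigl(m[\operatorname{supp} w]\bigr)$; consequently the state is Gaussian and supported on weight $\leq 2$ if and only if every $4 \times 4$ Pfaffian minor of $m$ vanishes, that is, if and only if $\operatorname{rank} m \leq 2$. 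The variety of rank-$\leq 2$ antisymmetric $n \times n$ matrices is the affine cone over the Pl\"ucker-embedded Grassmannian $\mathrm{Gr}(2,n)$, of dimension $2(n-2) + 1 = 2n - 3$; the locus $a_0 = 0$ is a hyperplane section of strictly smaller dimension, so the manifold has dimension $2n - 3$.

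The main obstacle — beyond the bookkeeping of re-applying the structural claim with the roles of $\ket{\psi_0}, \ket{\psi_1}, \ket{\psi_2}$ permuted — is establishing the amplitude/Pfaffian dictionary cleanly: one must check that the recursion coming from $f(0,w) = 0$ really is the Pfaffian Laplace expansion with the correct signs, and that once the constraints $f(0,w) = 0$ are imposed the remaining constraints $f(u,v) = 0$ in $\mathcal{F}$ hold automatically, which is exactly the independence statement of Proposition~\ref{prop:constraints}. Granting this dictionary, the dimension count is the standard geometry of $\mathrm{Gr}(2,n)$, and the structural claim is the elementary amplitude manipulation of the first paragraph.
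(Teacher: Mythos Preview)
Your structural argument is essentially the paper's: both rotate $\ket{\psi_0}$ to $\ket{0}$, evaluate $f(0,w)$ on the amplitudes of $\alpha\ket*{\tilde\psi_1}$ and $\beta\ket*{\tilde\psi_2}$, and subtract to force $a_w=0$ for $\abs{w}\geq 4$. (There is a phrasing slip --- you list $0$ among the labels and then say all have nonzero Hamming weight --- but your conclusion ``except the single factor $1-b_0$'' is correct.)

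For the dimension claim you take a genuinely different route. The paper keeps the amplitudes $(a_y)_{\abs{y}=2}$ as coordinates, writes down the system $\tilde{\mathcal{F}}$ of weight-$4$ equations, and in Appendix~\ref{sec:app.triple} shows by a case-by-case sign analysis that once one nonzero $a_x$ is fixed exactly $\binom{n-2}{2}$ of these equations are independent, leaving $\binom{n}{2}-\binom{n-2}{2}=2n-3$ free parameters. You instead invoke the Pfaffian formula for Gaussian amplitudes to identify weight-$\leq 2$ Gaussians with antisymmetric matrices of rank $\leq 2$, and then read off the dimension as that of the affine cone over $\mathrm{Gr}(2,n)$. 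Your argument is more conceptual and makes visible that $\tilde{\mathcal{F}}$ is nothing but the Pl\"ucker relations, at the cost of importing the Pfaffian dictionary (which you correctly note is the content of Proposition~\ref{prop:constraints}) and the standard dimension of the Grassmannian; the paper's approach is entirely self-contained but pays for this with the sign bookkeeping. Your trick of rewriting $\ket{0}=\alpha^{-1}\ket{\psi_0}-\beta\alpha^{-1}\ket{\psi'}$ and using that Gaussian stabilisers of $\ket{0}$ are number-conserving cleanly handles the reverse inclusion, which the paper sidesteps by fixing $\ket{\psi_0}$ rather than $\ket{\psi}$.
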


\begin{proof}
    Let $U \in \mathcal{G}$ be the Gaussian operator s.t. $\ket{\psi_0} = U\ket{0}$.
    Define also 
    $\ket*{\tilde{\psi}_1} = U^\dag\ket{\psi_1}, \, \ket*{\tilde{\psi}_2} = U^\dag\ket{\psi_2}$.
    Then we have $\ket{0} = \alpha \ket*{\tilde{\psi_1}} + \beta \ket*{\tilde{\psi_2}}$.

    Write $\alpha \ket*{\tilde{\psi}_1} = \sum_{x\in \mathcal{A}_n} a_x \ket{x}$ where without loss of generality we can set $0 < a_0 \leq 1$. Then $\beta\ket*{\tilde{\psi}_2} = (1- a_0)\ket{0} - \sum_{x \in \mathcal{A}_n : \abs{x} \geq 2} a_x \ket{x}$.

    We require these states to be Gaussian. 
    For each $x \in \mathcal{A}_n$ with $\abs{x} \geq 4$, we have a constraint of the form:
    \begin{equation} \label{eq:triple_constraint_general}
        f(0,\,x) = 
        c_0 c_x - \sum_{i = 2}^{\abs{x}} (-1)^i c_{\overline{0}^{k_1,k_i}} c_{\overline{x}^{k_1,k_i}} = 0
    \end{equation}

    Imposing this constraint for $\alpha \ket*{\tilde{\psi}_1}$ and $\beta\ket*{\tilde{\psi}_2}$ respectively gives:
    \begin{align}
        a_0 a_x - \sum_{i = 2}^{\abs{x}} 
        (-1)^i a_{\overline{0}^{k_1,k_i}} a_{\overline{x}^{k_1,k_i}} &= 0 \label{eq:triples_constraints_1}\\
        (1-a_0)(-a_x) - \sum_{i = 2}^{\abs{x}} (-1)^i 
        (-a_{\overline{0}^{k_1,k_i}} ) (-a_{\overline{x}^{k_1,k_i}} ) &= 0
        \label{eq:triples_constraints_2}
    \end{align}
    Comparing \autoref{eq:triples_constraints_1} \& \autoref{eq:triples_constraints_2} we see that 
    $a_x = 0$ for all $\abs{x} \geq 4$.

    So we have 
    $\alpha \ket*{\tilde{\psi}_1} = a_0 \ket{0} + \sum_{x\in \mathcal{A}_n : \abs{x} = 2} a_x \ket{x}$, and 
    $\beta\ket*{\tilde{\psi}_2} = (1- a_0)\ket{0} - \sum_{x \in \mathcal{A}_n : \abs{x} = 2} a_x\ket{x}$.
    This gives us the first part of our result.

    For the second part, it is necessary to pick the $a_y$ with $\abs{y} \leq 2$ in such a way that $a_x = 0$ for $\abs{x} \geq 4$. 
    
    Note that $a_x = 0$ for all $\abs{x} = 4$ forces $a_x = 0$ for all $\abs{x} \geq 4$ due to the recursive nature of~\autoref{eq:triple_constraint_general}. 
    
    For $\abs{x} = 6$, since $\abs{\overline{x}^{k_1,\,k_i}} = 4$, upon imposing $a_{x'} = 0$ for all $\abs{x'} = 4$, \autoref{eq:triple_constraint_general} becomes $f(x) = a_0 a_x = 0$.
    Thus $a_x = 0$.
    
    Proceeding iteratively for $\abs{x} = 8,\,10,\,\ldots$, we see that indeed $a_x = 0$ for all $\abs{x} \geq 4$.
    
    The relevant equations, which act as constraints on the $a_y$ with $\abs{y} \leq 2$, are then:
    \begin{equation*}
        \tilde{\mathcal{F}} = \left\{
        \sum_{i = 2}^{4} (-1)^i a_{\overline{0}^{k_1,k_i}} a_{\overline{y}^{k_1,k_i}} = 0 
        : \abs{x} = 4,\, D(0,\,x) = \{k_i : i = 1,\,\ldots,\,4\}
        \right\}
    \end{equation*}
    For each $x, \, \exists \, \binom{n-2}{2}$ choices of $y \in \mathcal{A}_n$ with $\abs{y} = 4$ st $d(x, \, y) = 2$.
    In other words, each $a_x$ appears in exactly $\binom{n-2}{2}$ equations. 
    We may assume $a_x \neq 0$ for some $\abs{x} = 2$, else the vectors in the linearly dependent triple are all parallel. 
    In~\autoref{sec:app.triple} we show that satisfying the equations explicitly containing $a_x$ is sufficient to satisfy all the equations. 
    This leaves $ \binom{n}{2} - \binom{n-2}{2} = 2n - 3$ of the variables unconstrained.
\end{proof}
\section{An upper bound for Gaussian fidelity} \label{sec:fidelity}
The \textit{Stabilizer Fidelity} was first defined in \cite{Bravyi2019}.
It appeared as a lower bound for the Stabilizer Extent, which we briefly discuss in~\autoref{sec:extent}, and measures the maximum overlap of a state $\ket{\psi}$ with any stabilizer state $\ket{\phi}$.
One can define the \textit{Gaussian Fidelity} similarly.

\begin{defi}
    The \textit{Gaussian Fidelity} is given by $F_G(\ket{\Psi}) \coloneqq \sup_{\ket{s} \in G} \abs{\braket{\Psi}{s}}^2$.
\end{defi}
We will generally refer to the Gaussian fidelity simply as fidelity for brevity. It provides a convenient measure of the closeness of the given state to being Gaussian.
Via the same convex duality arguments as for the stabilizer case, the fidelity naturally gives a lower bound for the Gaussian extent, which is discussed in~\autoref{sec:extent}.

Here, we give an upper bound for the fidelity of generic Haar-random states. 
We begin by constructing an $\epsilon$-net \cite{Hayden2004} for the Gaussian states. 
By a standard union-bound argument, we obtain a bound for the overlap of random states with this discrete set. 
This directly leads to a bound for the overlap with any Gaussian state i.e. a bound for the fidelity.

\subsection{Using a net to bound the fidelity}
\begin{defi}
    An $\epsilon$-net $\mathcal{N}$ is a set of Gaussian states such that for every $\ket{s} \in G$ there exists $\ket*{\tilde{s}} \in \mathcal{N}$ with $\norm{\ket{s}\bra{s} - \ket*{\tilde{s}}\bra*{\tilde{s}}}_1 < \epsilon$ \cite{Hayden2004}. 
\end{defi}

\begin{prop}\label{prop:eps_net}
    Let $\epsilon = 2^{-l}$. Then there is an $\epsilon$-net $\mathcal{N}$ with cardinality $\log_2 \abs{\mathcal{N}} \leq n^4 + 2n^2 + n+ ln^2$.
\end{prop}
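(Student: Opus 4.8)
\emph{Proof idea.} I would build the net directly from the explicit parametrisation of Gaussian states given by Proposition~\ref{prop:constraints}, rather than from a gate decomposition. The plan is: (i) show that, up to normalisation and global phase, every Gaussian state is the image $\phi(a)$ of an explicit map $\phi$ of only $\binom{n}{2}$ complex ``base'' amplitudes $a$, obtained by iterating the recursion \eqref{eq:indep_constraints_y}; (ii) lay a uniform grid on the box in which these base amplitudes live; (iii) bound the modulus of continuity of $\phi$ followed by normalisation, so that rounding $a$ to the grid moves the state by less than $\epsilon$ in trace norm; (iv) count grid points. Taking $\mathcal N$ to be the set of normalised vectors $\phi(a)/\norm{\phi(a)}$ over all pivots $y\in\mathcal A_n$ and all grid points $a$ makes $\mathcal N\subseteq G$ automatic, since by Proposition~\ref{prop:constraints} anything satisfying $\mathcal F^y_{\text{indep}}$ with $a_y\neq0$ is Gaussian.

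For (i), I would fix a Gaussian state $\ket{s}=\sum_{x\in\mathcal A_n}a_x\ket{x}$. Since $\sum_{x\in\mathcal A_n}\abs{a_x}^2=1$ and $\abs{\mathcal A_n}=2^{n-1}$, pigeonhole gives some $y$ with $\abs{a_y}\ge 2^{-(n-1)/2}$; rescaling so that $a_y=1$ changes only normalisation and global phase and forces $\abs{a_z}\le 2^{(n-1)/2}=:B$ for every base amplitude $z$ with $d(y,z)=2$. By Proposition~\ref{prop:constraints} the rest of the state is then determined: iterating \eqref{eq:indep_constraints_y} with $a_y=1$ expresses each $a_w$ with $d(y,w)=2m$ as a degree-$m$ polynomial (with $\pm1$ coefficients, $m\le n/2$) in the base amplitudes.

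Step (iii) is the crux. Unrolling the recursion, each such $a_w$ has at most $n^{m}$ monomials, so on the box $\{\abs{a_z}\le B\}$ both $\abs{a_w}$ and each $\abs{\pdv{a_w}{a_z}}$ are at most $(nB)^{n/2}=2^{O(n^2)}$; hence $\phi$ is Lipschitz there with constant $L=2^{O(n^2)}$, and since $\norm{\phi(a)}\ge\abs{a_y}=1$ the map $v\mapsto v/\norm{v}$ is $2$-Lipschitz on the range, so base amplitudes within $\eta$ yield normalised Gaussian states within $2L\eta$ in $\ell_2$-norm and density matrices within $4L\eta$ in trace norm. Choosing grid spacing $\eta=2^{-l}/(4L)$ then gives the covering property: rounding the good-pivot, rescaled base amplitudes of any $\ket{s}$ to the grid produces an $\epsilon$-close element of $\mathcal N$. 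For (iv): there are $\le 2^{n-1}$ pivots and, per pivot, at most $(\lceil 2B/\eta\rceil+1)^{2\binom{n}{2}}\le(6B/\eta)^{n^2}$ grid points (each base amplitude contributing two real coordinates in $[-B,B]$), whence $\log_2\abs{\mathcal N}\le (n-1)+n^2\bigl(\log_2(6B)+\log_2(1/\eta)\bigr)=n+n^2\bigl(O(n^2)+l\bigr)$, and a generous accounting of the $O(\cdot)$ terms yields $n^4+2n^2+n+ln^2$.

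The main obstacle is the stability estimate in step (iii). It relies essentially on the good-pivot bound $\abs{a_y}\ge 2^{-(n-1)/2}$: without a lower bound on the pivot amplitude the base amplitudes — and with them the $1/a_y$ factors implicit in the general form of \eqref{eq:indep_constraints_y}, hence $L$ — are uncontrolled. One must also count the unrolled recursion tree carefully enough to keep the exponent of $L$ at $O(n^2)$, and verify that rounding the base amplitudes and re-solving \eqref{eq:indep_constraints_y} really returns a bona fide, normalisable Gaussian state close to the original; this is exactly the point where Proposition~\ref{prop:constraints} does the work, both guaranteeing Gaussianity of the net elements and identifying the explicit recursion as the right object to perturb. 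The remaining bookkeeping — fixing $\eta$, enumerating pivots, and summing bit-counts down to $n^4+2n^2+n+ln^2$ — is routine. (An alternative route, discretising the angles of an $O(n^2)$-gate matchgate decomposition of the Gaussian unitary, gives a bound of the same shape but is less closely tied to our explicit description.)
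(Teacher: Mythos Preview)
Your proposal is correct and follows essentially the same strategy as the paper: partition $G$ by the dominant amplitude $a_y$ (giving the lower bound $\abs{a_y}\ge 2^{-(n-1)/2}$), parametrise each piece by the $\binom{n}{2}$ base amplitudes via Proposition~\ref{prop:constraints}, establish a $2^{O(n^2)}$ Lipschitz constant for the recursion \eqref{eq:indep_constraints_y}, and discretise the base amplitudes at scale $\eta=2^{-n^2}\epsilon$. The only cosmetic differences are that the paper keeps $a_y$ unnormalised in $[2^{(1-n)/2},1]$ and bounds the cardinality by a maximal-packing/volume argument rather than a direct grid count, and it derives the Lipschitz constant via a step-by-step recurrence $\Delta_w = 2^n(2w-1)\Delta_{w-2}$ rather than by unrolling the polynomial; the resulting arithmetic is the same.
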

We defer the proof to~\autoref{sec:app.net}.
\begin{prop}
    For a generic Haar-random state $\ket{\Psi}$, the Gaussian Fidelity is exponentially small with probability exponentially close to unity. 
    In particular, for any $\delta > 0$,
    $F_G(\ket{\Psi}) \leq 2^{-n + 2}(1+ \delta)n^4$ with probability $\approx 1 - e^{-\delta n^4}$.
\end{prop}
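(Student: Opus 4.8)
The plan is to use the standard net-plus-union-bound argument, combining Proposition~\ref{prop:eps_net} with a concentration bound for the overlap of a fixed state with a Haar-random state. First I would recall the relevant tail estimate: for a fixed pure state $\ket{s}$ and a Haar-random $\ket{\Psi}$ on the full $2^n$-dimensional space (or on the even-parity subspace $\mathcal{H}_n$ of dimension $2^{n-1}$, depending on which setting the proposition intends), the overlap $\abs{\braket{\Psi}{s}}^2$ has mean $1/D$ with $D$ the Hilbert-space dimension, and satisfies $\Pr\left[\abs{\braket{\Psi}{s}}^2 \geq t\right] = (1-t)^{D-1} \leq e^{-(D-1)t}$. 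Setting $t = 2^{-n+2}(1+\delta)n^4$ makes the single-state failure probability roughly $e^{-\Theta(\delta n^4)}$ once we account for the factor $D \sim 2^n$.

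Next I would apply the union bound over the net $\mathcal{N}$. By Proposition~\ref{prop:eps_net}, $\log_2\abs{\mathcal{N}} \leq n^4 + 2n^2 + n + l n^2$, so choosing the resolution $\epsilon = 2^{-l}$ with $l$ a small constant (or $l = \mathrm{poly}(n)$ of lower order than $n^4$, e.g.\ $l$ constant suffices) gives $\abs{\mathcal{N}} \leq 2^{(1+o(1))n^4}$. The probability that \emph{some} $\ket*{\tilde s}\in\mathcal{N}$ has overlap exceeding $t$ with $\ket{\Psi}$ is then at most $\abs{\mathcal{N}}\, e^{-(D-1)t}$; plugging in $t = 2^{-n+2}(1+\delta)n^4$ and $D-1 \approx 2^{n-1}$ yields an exponent of order $n^4\ln 2 - 2(1+\delta)n^4 + \ldots < 0$ for $\delta$ bounded below, and more precisely a failure probability $\approx e^{-\delta n^4}$ after the $\ln 2$ and the constant factor in $t$ are absorbed — this is exactly the bookkeeping that fixes the constant $2$ in the exponent $2^{-n+2}$.

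Then I would convert the net bound into a bound on the true fidelity. Given any Gaussian $\ket{s}\in G$, pick $\ket*{\tilde s}\in\mathcal{N}$ with $\norm{\ket{s}\bra{s}-\ket*{\tilde s}\bra*{\tilde s}}_1 < \epsilon$. Since $\abs{\braket{\Psi}{s}}^2 = \bra{\Psi}(\ket{s}\bra{s})\ket{\Psi}$ and the trace distance controls the difference of expectation values of the projector $\ket{\Psi}\bra{\Psi}$, we get $\big|\,\abs{\braket{\Psi}{s}}^2 - \abs{\braket{\Psi}{\tilde s}}^2\,\big| \leq \tfrac{1}{2}\norm{\ket{s}\bra{s}-\ket*{\tilde s}\bra*{\tilde s}}_1 < \epsilon/2$. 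Taking the supremum over $\ket{s}\in G$, $F_G(\ket{\Psi}) \leq \max_{\ket*{\tilde s}\in\mathcal{N}}\abs{\braket{\Psi}{\tilde s}}^2 + \epsilon/2$, and since $\epsilon = 2^{-l}$ is (exponentially) smaller than the $2^{-n+2}(1+\delta)n^4$ term for $n$ large, it is absorbed into the $\delta$.

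The main obstacle, I expect, is the constant-chasing needed to land exactly on $2^{-n+2}(1+\delta)n^4$ with failure probability $\approx 1-e^{-\delta n^4}$: one must be careful about whether $\ket{\Psi}$ is Haar-random on the full space or on $\mathcal{H}_n$ (a factor of $2$ in the dimension, which is presumably the source of the $+2$ in the exponent), about the $\ln 2$ from converting $\log_2\abs{\mathcal{N}}$ to a natural exponent, and about which lower-order terms in $\log_2\abs{\mathcal{N}}$ force a mild constraint on $l$. None of this is conceptually hard; the delicate part is stating the hypotheses (the regime of $n$, the allowed range of $\delta$, and the choice of $l$) so that the clean closed-form bound in the proposition holds verbatim. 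A secondary, purely technical point is justifying the tail bound $\Pr[\abs{\braket{\Psi}{s}}^2\geq t] = (1-t)^{D-1}$ for Haar-random states, which follows from the fact that $\abs{\braket{\Psi}{s}}^2$ is distributed as the first coordinate of a uniformly random point on the unit sphere in $\mathbb{C}^D$, i.e.\ as a $\mathrm{Beta}(1,D-1)$ variable.
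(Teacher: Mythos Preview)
Your approach is the same as the paper's---single-state tail bound, union bound over the net from Proposition~\ref{prop:eps_net}, then pass from the net to arbitrary Gaussians---but there is a real gap in your choice of net resolution. You assert that ``$l$ constant suffices'' and then later that ``$\epsilon = 2^{-l}$ is (exponentially) smaller than the $2^{-n+2}(1+\delta)n^4$ term''; these two statements are incompatible. With $l$ a constant, the additive net error $\epsilon/2$ in your final step is itself a constant, which swamps the exponentially small target $2^{-n+2}(1+\delta)n^4$ and destroys the bound entirely. You need $l$ at least of order $n$ so that $2^{-l} \lesssim 2^{-n}n^4$; this still keeps $ln^2 = o(n^4)$ in the net cardinality, so the union bound is unaffected. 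The paper makes the explicit choice $l = n - \log_2\bigl((1+\delta)n^4\bigr)$ and takes the per-net-state threshold to be $2^{-l}$ itself (not the final target); the factor $4$ in $2^{-l+2} = 2^{-n+2}(1+\delta)n^4$ then arises from the net-approximation step rather than being inserted into the threshold by hand.

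A minor remark in the other direction: your net-to-fidelity step via trace distance, $\bigl|\,\abs{\braket{\Psi}{s}}^2 - \abs{\braket{\Psi}{\tilde s}}^2\,\bigr| \leq \tfrac{1}{2}\norm{\ket{s}\bra{s}-\ket*{\tilde s}\bra*{\tilde s}}_1$, is actually cleaner than what the paper does (the paper writes $\ket{s} = \ket*{\tilde s} + 2^{-l}\ket{\zeta}$ with $\norm{\ket{\zeta}}_1 \leq 1$, which sits a little awkwardly with a net defined in trace distance of density operators). Once you fix the choice of $l$, your version goes through.
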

\begin{proof}
    Following the proof of Claim 2 in \cite{Bravyi2019}, for any $n$-qubit state $\ket*{\tilde{s}}$, we have 
    $$\mathbb{P}\left(\abs{\braket*{\tilde{s}}{\Psi}}^2 \geq x\right) = (1-x)^{2^n-1} \leq e^{-x(2^n-1)}$$
    Taking a union bound over $\mathcal{N}$ with $x = 2^{-l}$,
    \begin{align*}
        \mathbb{P}\left(
        \max_{\ket*{\tilde{s}} \in \mathcal{N}} \abs{\braket*{\tilde{s}}{\Psi}}^2 \geq 2^{-l}
        \right) 
        &\leq \abs{\mathcal{N}} \cdot \exp({-2^{-l}(2^n - 1)}) \\
        & \leq \exp(-2^{n-l} + (n^4 + 2n^2 + n + ln^2)\log(2))
    \end{align*}
    Take $l = n - \log_2((1+\delta)n^4)$ for any $\delta > 0$. Then
    $$
    \mathbb{P}\left(
        \max_{\ket*{\tilde{s}} \in \mathcal{N}} \abs{\braket*{\tilde{s}}{\Psi}}^2 \geq 2^{-l}
    \right) \leq \exp(-\delta n^4 + \mathcal{O}(n^2))
    $$
    By definition of a $2^{-l}$-net, we can write $\ket{s} = \ket*{\tilde{s}} + 2^{-l}\ket{\zeta}$, where $\norm{\ket{\zeta}}_1 \leq 1$, $\ket{s} \in G$ and $\ket*{\tilde{s}} \in \mathcal{N}$. Then
    \begin{align*}
        F_G(\ket{\Psi}) &= \sup_{\ket{s} \in G} \abs{\braket{s}{\Psi}}^2 \\
        &= \abs{\braket*{\tilde{s}}{\Psi} + \epsilon \braket{\zeta}{\Psi}} ^ 2 \\
        &\leq (2^{-\frac{l}{2}} + 2^{-l}\cdot 1)^2 \\
        & \leq 2^{-l + 2}\\
        &= 2^{-n + 2}(1+ \delta)n^4
    \end{align*}
    with probability $\approx 1- e^{-\delta n^4}$.
\end{proof}
\section{Multiplicativity of Gaussian extent} \label{sec:extent}

The \textit{Stabilizer Extent} was first defined in \cite{Bravyi2019} as a quantity that appears in an upper bound for the \textit{Approximate Stabilizer Rank}.

For some precision parameter $\delta$, the approximate stabilizer rank $\chi_\delta(\ket{\psi})$ is the smallest integer $k$ such that $\norm{\ket{\psi} - \ket{\psi'}} \leq \delta$ for some state $\ket{\psi'}$ with exact stabilizer rank $k$.

Then the stabilizer extent  $\xi$ is such that:
\begin{equation}
    \chi_\delta(\ket{\psi}) \leq 1 + \frac{\xi(\ket{\psi)}}{\delta^2}
\end{equation}

It was found to be easier to work with than the rank, enjoying a natural phrasing as a Second Order Cone Programming (SOCP) problem. 
For a thorough review of this topic, we refer the reader to e.g. \cite{alizadeh_2001_cone, lobo_applications_1998, Boyd_Vandenberghe_2004}.
Here, we provide a quick summary.

The standard (primal) form of a SOCP is:
\begin{align}
    \min& \ f^T x \\
    \text{s.t.}& \ \norm{A_i x + b_i} \leq c_i^T x + d_i, \quad i = 1,\,\ldots,\,N \nonumber
\end{align}
where $x \in \mathbb{R}^n$ is the optimization variable and $f \in \mathbb{R}^n,\,A_i \in \mathbb{R}^{n_i \cross n},\, b_i \in \mathbb{R}^{n_i},\,c_i \in \mathbb{R}^n,\,d_i \in \mathbb{R}$ are problem parameters.

If $\forall i,\, n_i = 1$, then the SOCP reduces to a Linear Program. 
Similarly, Quadratic Programs may be formulated as a SOCP.
Conversely, any SOCP may be represented by a Semi-Definite Program (SDP).

Various tools from convex optimization then lend themselves to this formulation.
In particular, the standard primal SOCP may be cast into a dual form:
\begin{align}
    \max& \ - \sum_{i=1}^N (b_i^T z_i + d_i w_i) \\
    \text{s.t.}& \ \sum_{i=1}^N (A_i^T z_i + c_i w_i) = f, \nonumber\\
    & \ \norm{z_i} \leq w_i,\quad i = 1,\,\ldots,\,N \nonumber
\end{align}
where the dual optimization variables are $z_i \in \mathbb{R}^{n_i}$ and $w \in \mathbb{R}^N$.

A problem is called \textit{feasible} if there exists $x$ that satisfies all the constraints, and \textit{strictly feasible} if there exists $x$ that satisfies all the constraints with strict inequality.

For a primal-dual pair, let $p^*$ and $d^*$ denote the optimal values of the primal and dual problems respectively. We have the following facts (see e.g. \cite{lobo_applications_1998}):
\begin{enumerate}
    \item (\textit{weak duality}) $p^* \geq d^*$
    \item (\textit{strong duality}) if either problem is strictly feasible, then $p^* = d^*$.
    \item If both problems are strictly feasible, there exist primal and dual feasible points that obtain the optimal values.
\end{enumerate}
The same facts are also true for SDPs.

Returning to the stabilizer extent, Bravyi et al. showed in~\cite{Bravyi2019} that the Stabilizer Extent is multiplicative on systems of at most 3 qubits. 
Conversely, Heimandahl et al. showed in~\cite{Heimendahl2021} that it is \textit{not} in general multiplicative; in fact, they showed that random states give a counterexample with probability exponentially close to 1.

\subsection{Summary of results}
We follow the proof techniques of \cite{Heimendahl2021} almost to their conclusion for the fermionic case in an attempt to prove the strict sub-multiplicativity of the Gaussian extent. 
The proof technique is to consider the dual problem again, finding a pair of points for which the tensor product of their optimal dual witnesses is \emph{not} a feasible point.

We show that, generically, optimal dual witnesses are extreme points of the set of feasible points, and that they are unique. 
These facts are almost immediate for the stabilizer case but require careful treatment in the Gaussian setting.
The facts are necessary to analyse tensor products of primal points and their dual witnesses correctly.
The upper bound on the Gaussian Fidelity of~\autoref{sec:fidelity} implies that the extreme points have an exponentially large norm.
Finding a Gaussian state for which the overlap with $2$ of these extreme points is greater than unity therefore seems plausible.

\subsection{Problem formulation}
For Gaussian states on $n$ qubits, we can formulate the Gaussian extent as a program.
\begin{defi}
    For any $\ket{\Psi} \in \mathcal{H}_n$, the \textit{Gaussian extent} $\xi_G(\ket{\Psi})$ is given by
    \begin{alignat}{3}
        & \sqrt{\xi_G (\ket{\Psi})} = &&\min &&\int_{s \in G} \abs{c(s)} ds \\
        & && \text{s.t.} &&\int_{s \in G} c(s) \ket{s} ds = \ket{\Psi} \nonumber \\
        & &&  &&c(s) \in \mathbb{C} \nonumber
    \end{alignat}
    Since the Gaussian states lie on a manifold, as opposed to the discrete set of stabilizer states, we now have continuous degrees of freedom in the definition.
    $c$ is any complex-valued distribution such that $\ket{\Psi}$ is decomposed into Gaussian states. In particular, Dirac delta-like peaks are permitted, to give finite sums.
\end{defi}

Through standard methods (see the first Appendix of \cite{Heimendahl2021} for details) we can derive the dual form
\begin{alignat}{4}
    & \sqrt{\xi_G (\ket{\Psi})} = &&\max && \Re (\braket{\Psi}{y}) \label{eq:dual_problem} \\
    & && \text{s.t.} && \ket{y} \in \mathcal{H}_n &&\quad (\ket{y} \text{not necessarily normalised)} \nonumber \\
    & && && \sqrt{F_G(\ket{y})} \leq 1 \nonumber
\end{alignat}
Following standard theory, we define the \textit{feasible set}:
\begin{defi}
    The \textit{feasible set} $M_G$ for the dual problem \autoref{eq:dual_problem} is defined as
    $$
    M_G \coloneqq \{ \ket{y} \in \mathcal{H}_n : \abs{\braket{s}{y}}^2 \leq 1 \ \forall \, \ket{s} \in G \}
    $$
\end{defi}
Since $G$ contains a basis of $\mathcal{H}_n$, both problems are strictly feasible and strong duality holds. Also, since $\ket{\Psi} / \sqrt{F_G(\ket{\Psi})}$ is feasible for the dual, we get the lower bound
$$
\xi_G{(\ket{\Psi})} \geq \frac{1}{F_G(\ket{\Psi})}
$$


\subsection{Optimal dual witnesses are generically unique, extreme points of the feasible set}
\label{subsec:unique_dual_witness}
Following the standard theory for second-order cone problems, we associate a \textit{normal cone} to each feasible point.
\begin{defi}
    The normal cone of a dually-feasible point $\ket{y} \in M_G$ is the set of primal points for which $\ket{y}$ is an optimal witness. In particular:
    $$
        C_y \coloneqq \left\{ 
        \ket{\Psi} \in \mathcal{H}_n : \Re(\braket{\Psi}{y}) = \max_{p \in M_G} \Re(\braket{\Psi}{p})
        \right\}
    $$
\end{defi}

The union over all $\ket{y} \in M_G$ of these cones must be all of $\mathcal{H}_n$, since every $\ket{\Psi} \in \mathcal{H}_n$ has an optimal dual witness.

It can easily be checked that an equivalent definition is
$$
    C_y = \text{cone}\left\{ \ket{s} \in G : \braket{s}{y} = 1 \right\}
$$
where the cone over a set $X$, denoted cone$\{X\}$, is the set of all linear combinations with non-negative coefficients of a finite subset of elements of $X$.

In~\autoref{sec:app.dual}, we show that if $\ket{\Psi} \in \text{relint}(C_y)$ where $\ket{y}$ is an extreme point of $M_G$ then $\ket{y}$ is the unique optimal dual witness. For a Haar-random point, this occurs almost-surely. For points $\ket{\Psi} \notin \text{relint}(C_y)$, optimal dual witnesses are not unique but may still be chosen to be extreme points of $M_G$.

\subsection{Towards a proof that Gaussian extent is not multiplicative in general}
To prove that extent is not multiplicative, we need provide only a single counterexample.
Our attempt to find one focused on using the methods of \cite{Heimendahl2021} in the Gaussian case.

In~\autoref{sec:fidelity}, we showed that  for a generic Haar-random $\ket{\Psi} \in \mathcal{H}_n$, $F_G(\ket{\Psi}) \leq 2^{-n + 3}n^4$.
In~\autoref{subsec:unique_dual_witness}, we showed that almost-surely, $\ket{\Psi}$ has a unique optimal dual witness $\ket{y}$. 
We may lower bound the norm of $\ket{y}$:
\begin{align} \label{eq:norm_y}
    \braket{y}{y} = \braket{y}{y} \braket{\Psi}{\Psi}
    \geq \abs{\braket{y}{\Psi}}^2 
    = \xi_G(\ket{\Psi}) 
    \geq \frac{1}{F_G(\ket{\Psi})} 
    \geq 2^{n-3}n^{-4}
\end{align}

Lemma 5 in \cite{Heimendahl2021} states that the extent is multiplicative on tensor product dictionaries.
The equivalent result here is:

\begin{lem}
    Let $\ket{\Psi_1}$ and $\ket{\Psi_2}$ be any states in $\mathcal{H}_n$ with optimal dual witnesses $\ket{y_1}$ and $\ket{y_2}$.
    Then $\xi_{G \otimes G}(\ket{\Psi_1}\ket{\Psi_2}) = \xi_G(\ket{\Psi_1}) \xi_G(\ket{\Psi_2})$ with optimal dual witness $\ket{y_1}\ket{y_2}$.
\end{lem}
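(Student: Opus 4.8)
The plan is to prove the two inequalities $\xi_{G\otimes G}(\ket{\Psi_1}\ket{\Psi_2}) \leq \xi_G(\ket{\Psi_1})\xi_G(\ket{\Psi_2})$ and $\xi_{G\otimes G}(\ket{\Psi_1}\ket{\Psi_2}) \geq \xi_G(\ket{\Psi_1})\xi_G(\ket{\Psi_2})$ separately, exactly as in Lemma~5 of~\cite{Heimendahl2021}, and then identify the witness. The first (sub-multiplicativity) inequality is immediate from the definition of the Gaussian extent via the primal program: given optimal decompositions $\ket{\Psi_i} = \int_{s\in G} c_i(s)\ket{s}\,ds$ with $\int |c_i(s)|\,ds = \sqrt{\xi_G(\ket{\Psi_i})}$, the product distribution $c_1\otimes c_2$ decomposes $\ket{\Psi_1}\ket{\Psi_2}$ over product Gaussian states $\ket{s_1}\ket{s_2}$, which lie in the dictionary for the $G\otimes G$ program; its $\ell_1$-mass factorises, giving the bound. (Here $G\otimes G$ refers to the extent computed with dictionary $\{\ket{s_1}\ket{s_2} : \ket{s_i}\in G\}$ rather than the full Gaussian set on $2n$ qubits, so no feasibility subtlety arises.)

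For the reverse inequality I would work on the dual side. By strong duality (the program is strictly feasible since $G$ spans $\mathcal{H}_n$), $\sqrt{\xi_G(\ket{\Psi_i})} = \Re\braket{\Psi_i}{y_i}$ with $\ket{y_i}$ optimal, and we may normalise the phase so that $\braket{\Psi_i}{y_i} = \sqrt{\xi_G(\ket{\Psi_i})} > 0$. The candidate dual witness for the product problem is $\ket{y_1}\ket{y_2}$. I must check it is feasible for the $G\otimes G$ dual, i.e. $|\bra{s_1}\bra{s_2}(\ket{y_1}\ket{y_2})|^2 = |\braket{s_1}{y_1}|^2|\braket{s_2}{y_2}|^2 \leq 1$ for all $\ket{s_i}\in G$; this factorises and follows at once from $|\braket{s_i}{y_i}|^2 \leq 1$, which is the feasibility of each $\ket{y_i}$. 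Then
\begin{equation}
\sqrt{\xi_{G\otimes G}(\ket{\Psi_1}\ket{\Psi_2})} \geq \Re\bigl(\braket{\Psi_1}{y_1}\braket{\Psi_2}{y_2}\bigr) = \sqrt{\xi_G(\ket{\Psi_1})}\,\sqrt{\xi_G(\ket{\Psi_2})},
\end{equation}
using positivity of both factors. Combining with the sub-multiplicative bound gives equality, and since the bound is saturated by $\ket{y_1}\ket{y_2}$, that product is an optimal dual witness.

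The main point requiring care — and the only real obstacle — is the feasibility check for $\ket{y_1}\ket{y_2}$ on the restricted product dictionary versus the full $2n$-qubit Gaussian dictionary. For the product dictionary the argument above is clean. If one instead wants feasibility against all $2n$-qubit Gaussian states (which is not needed for the stated lemma, whose LHS is $\xi_{G\otimes G}$), one would need the analogue of Lemma~\ref{lem:tensor_feasibility}, and that is genuinely harder; here it is avoided by the choice of dictionary in the statement. A secondary technical point is ensuring the phase normalisation is consistent — since each $\xi_G(\ket{\Psi_i})$ is positive (extent is at least $1$), the product of the two positive real overlaps is positive, so taking real parts loses nothing. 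I would also remark that this lemma, unlike Lemma~\ref{lem:4_extent}, does not assert multiplicativity of the extent itself on $2n$ qubits — only on the product dictionary — which is precisely why it serves as a stepping stone towards a counterexample rather than a positive multiplicativity result.
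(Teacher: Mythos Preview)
Your proposal is correct and follows exactly the approach the paper indicates: the paper's own proof simply states that it is ``an immediate extension of the one in~\cite{Heimendahl2021}'', and your argument reproduces precisely that Lemma~5 structure (primal sub-multiplicativity via tensoring decompositions, dual super-multiplicativity via tensoring witnesses with the factorised feasibility check on the product dictionary). Your care in distinguishing the product dictionary $G\otimes G$ from the full $G_{2n}$ dictionary is apt and matches the paper's usage in Equation~\eqref{eq:extent_ineq}.
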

\begin{proof}
    The proof is an immediate extension of the one in \cite{Heimendahl2021}.
\end{proof}
We therefore observe
\begin{equation} \label{eq:extent_ineq}
    \xi_{G_{2n}}(\ket{\Psi_1}\ket{\Psi_2}) \leq \xi_{G_n \otimes G_n}(\ket{\Psi_1}\ket{\Psi_2}) = \xi_{G_n}(\ket{\Psi_1})\xi_{G_n}(\ket{\Psi_2})
\end{equation}
The inequality in~\autoref{eq:extent_ineq} will become strict if there exists $\ket{s} \in G_{2n}$ with $\abs{\bra{s}(\ket{y_1}\ket{y_2})}^2 > 1$.
Since, by~\autoref{eq:norm_y}, $\norm{y_i}$ are exponentially large, it seems plausible that such a state exists.
However, it is not obvious how to find it. 
\section{Gaussian rank} \label{sec:rank}
By the Gottesman-Knill Theorem, inner products of stabilizers with Clifford projectors $\bra{\phi_1}\Pi\ket{\phi_2}$ are classically efficiently computable. 
By decomposing the magic states injected into a Clifford circuit into a sum of stabilizer states, any quantum process may be brute-force simulated by computing inner products of this form \cite{Bravyi_2016}.

We define the minimum number of terms in the stabilizer decomposition of a state to be the Stabilizer Rank of that state~\cite{Bravyi_2016}. 
The number of terms in the above simulation method scales as $\chi(\ket{T}^{\otimes n})^2$, where $\chi(\ket{T}^{\otimes n})$ is the Stabilizer Rank of $n$ copies of the magic state $\ket{T}$. 
The stabilizer rank $\chi$ has received significant attention recently (e.g. \cite{Labib_2022, Peleg_2022, mehraban2023quadratic}). 
It is expected that $\chi_n$ must scale super-polynomially, else all quantum processes would be simulable in polynomial classical time.
However, it has proven difficult to achieve concrete results, and the state-of-the-art bound is currently $\chi(\ket{T}^{\otimes n}) = \Omega(n^2)$~\cite{mehraban2023quadratic}.

Since matchgate circuits are also classically simulable, it is natural to consider the same simulation technique in the MGC setting.

The starting point is to extend the set of matchgates to a gate set which allows for universal quantum computation.
One choice of such a gate is the SWAP gate~\cite{jozsa2008matchgates}.

Having expressed a general quantum process in terms of a ``matchgate + SWAP'' circuit, the next step is to replace all SWAP gates with the SWAP-gadget~\cite{Hebenstreit2019}.
These gadgets consist of an adaptive MGC acting on the lines we desire to SWAP, and consume $1$ copy of a magic state.

Hebenstreit et al~\cite{Hebenstreit2019} showed that all non-Gaussian states of definite parity are magic for MGCs.
We initially consider the $4$-qubit magic state $\ket{M} = \tfrac{1}{\sqrt{2}}(\ket{0} + \ket{15})$. 
Since this state violates the constraint $f(0,\,15) = 0$ given in~\autoref{eq:f_0_15}, it is not Gaussian, and is therefore magic. 
It also has the minimum possible Gaussian fidelity of $1/2$, and so in this sense it is intuitively ``maximally magic''.
Finally, this state is matchgate-equivalent to the magic state used in the SWAP-gadget of \cite{Hebenstreit2019}. 

\begin{defi}
    Let $\ket{\Psi}$ be an $n$-qubit state. The (exact) Gaussian rank $\chi_G(\ket{\Psi})$ is the smallest integer $k$ such that $\ket{\Psi}$ can be written as
    \begin{equation}
        \ket{\Psi} = \sum_{i = 1}^{k} c_i \ket{s_i}
    \end{equation}
    for some $n$-qubit Gaussian states $\ket{s_i}$ and some complex coefficients $c_i$.
\end{defi}

Clearly, $\ket{M}$ has Gaussian rank $\chi_G(\ket{M}) = 2$. Trivially, $\chi_G(\ket{M}^{\otimes k} \leq 2^k$.

We consider tensor products of $k = 2$ or 3 copies of $\ket{M}$ and seek out decompositions of rank less than $2^k$.

We also consider approximate decompositions of other choices of magic state which have a higher Gaussian fidelity and discuss the practical applicability of these choices.
One choice of magic state that offers a reasonable balance between non-Gaussianity and a relatively accurate decomposition is $\ket{\tilde{M}} \coloneqq \tfrac{1}{\sqrt{8}}(\sqrt{3}\ket{0} + \sqrt{2}\ket{3} + \sqrt{2}\ket{12} + \ket{15})$.

\subsection{Numerical results} \label{sec:numerical}
We used numerical optimization tools from the \textit{SciPy} \cite{2020SciPy} package to seek out low-rank decompositions.
The algorithm which found the most success was based on simulated annealing with cost function $L_{\ket{M}}(\ket{\Psi}) = \norm{\ket{{M}}^{\otimes k} - \ket{\Psi}}_2^2$.

Our program failed to find any exact low-rank decompositions for either 2 or 3 copies when using global optimization techniques for hundreds of hours of machine time. We view this as strong evidence that no such decompositions exist, but welcome future attempts with more refined techniques. 

Our code did see some success in finding approximate decompositions, but results were highly dependent on the choice of magic state. 
The code never found an approximate decomposition better than a trivial one for $\ket{M}\ket{M}$: using the 3 Gaussian states to get 3 out of 4 non-zero amplitudes correct.
Similarly, no good decomposition was found for $k = 3$.
We conjecture that good decompositions, even approximate ones, do not exist for only a few copies of this state due to the low Gaussian fidelity of $\ket{M}$.

Consider as a counterexample the magic state $\ket{M_\alpha} = \alpha \ket{0} + \sqrt{1-\alpha^2}\ket{15}$. 
As $\alpha \xrightarrow{} 1$, $F_G(\ket{M_\alpha}) \xrightarrow{} 1$ also.
For any $k$, it is then trivial to find an approximate Gaussian decomposition of $\ket{M_\alpha}^{\otimes k}$ with rank $1$; namely, $\ket{\Psi} = \alpha^k \ket{0}$. 
This decomposition has loss $L = \sqrt{2}k \alpha^{k-1} \sqrt{1-\alpha}$ to leading order in $(1-\alpha)$.
Such a magic state is not practically useful; we, therefore, choose to restrict to magic states which are bounded away from Gaussian states.

One alternative which did see success was the magic state $\ket*{\tilde{M}} = \tfrac{1}{\sqrt{8}}(\sqrt{3}\ket{0} + \sqrt{2}\ket{3} + \sqrt{2}\ket{12} + \ket{15})$.
For two copies of this state, decompositions with $L_{\ket*{\tilde{M}}}(\ket{\Psi}) \approx 10^{-5}$ were found.

Sample code is available, see~\autoref{sec:app.code}. 

\subsection{No symmetric low-rank decomposition for 2 copies of the magic state}
Based on the numerical results discussed in~\autoref{sec:numerical}, we conjecture that the Gaussian rank of 2 copies of $\ket{M}$ is $\chi_2(\ket{M}) = 4$.
We were unable to prove this in full generality, however, we were able to make some progress in a symmetry-reduced case.

Note that $\ket{M}$ is invariant under the (Gaussian) operation $Z_3 Z_4 = G_{34}(I,-I)$. 
This motivates considering a decomposition into Gaussian states which are invariant under the same operation on each block of $4$ qubits.

\begin{defi}
     Let $\ket{\Psi}$ be a $4n$-qubit state.
     For $j = 1,\,\ldots,\,n$, let $O_j \coloneqq Z_{4j-1} Z_{4j}$.    
     
     The \textit{symmetry-restricted Gaussian rank} $\tilde{\chi}_G(\ket{\Psi})$ is the smallest integer $k$ such that $\ket{\Psi}$ can be written as:
    \begin{equation}
        \ket{\Psi} = \sum_{i = 1}^{k} c_i \ket{s_i}
    \end{equation}
    for some complex coefficients $c_i$
    and some $n$-qubit Gaussian states $\ket{s_i}$ which satisfy $O_j\ket{s_i} = \ket{s_i}$ for $j = 1,\,\ldots,\,n$.
\end{defi}

\begin{prop} \label{prop:rank}
    The symmetry-restricted Gaussian rank of $\ket{M}\ket{M}$ is $\tilde{\chi}_G(\ket{M}\ket{M}) = 4$.
\end{prop}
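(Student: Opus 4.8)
The plan is to pair a trivial upper bound with a structural reduction for the lower bound. For $\tilde\chi_G(\ket M\ket M)\le 4$: expand $\ket M\ket M=\tfrac12(\ket{0000}+\ket{1111})_{1234}\otimes(\ket{0000}+\ket{1111})_{5678}$ into its four computational-basis terms. Each is an even-weight computational basis state, hence Gaussian, and has qubits $3,4$ equal and qubits $7,8$ equal, so each is fixed by $O_1=Z_3Z_4$ and $O_2=Z_7Z_8$; thus $\tilde\chi_G\le 4$.

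The heart of the lower bound is the following structural lemma: \emph{every $8$-qubit Gaussian state $\ket s$ with $O_1\ket s=O_2\ket s=\ket s$ factorizes as}
\begin{equation*}
  \ket s=\ket g_{1256}\otimes(\alpha\ket{00}+\beta\ket{11})_{34}\otimes(\gamma\ket{00}+\delta\ket{11})_{78},
\end{equation*}
\emph{with $\ket g$ a Gaussian state on qubits $\{1,2,5,6\}$.} I would prove this via the Majorana covariance-matrix picture, which is equivalent to the amplitude description of Proposition~\ref{prop:constraints}. In the Jordan--Wigner representation one has $Z_3Z_4=-c_5c_6c_7c_8$, so $O_1\ket s=\ket s$ forces $c_5c_6c_7c_8\ket s=-\ket s$, hence $\bra s c_5c_6c_7c_8\ket s=-1$; by Wick's theorem this equals $-\mathrm{Pf}(M|_{\{5,6,7,8\}})$, the Pfaffian of the principal submatrix of the covariance matrix $M$, so $\mathrm{Pf}(M|_{\{5,6,7,8\}})=1$. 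Since a principal submatrix $M'$ of a pure-state covariance matrix satisfies $-M'^2\le I$, one has $|\mathrm{Pf}(M')|\le 1$ with equality exactly when the corresponding modes split off as a pure Gaussian factor; so qubits $3,4$ decouple, and $\langle Z_3Z_4\rangle=1$ pins that factor to $\mathrm{span}\{\ket{00},\ket{11}\}$. The same argument applied to $O_2$ peels off qubits $7,8$, and the remaining $4$-qubit factor is Gaussian because tensor factors of a Gaussian state are Gaussian (its covariance matrix is the corresponding diagonal block).

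Granting the lemma, suppose $\ket M\ket M=\sum_{i=1}^r c_i\ket{s_i}$ with each $\ket{s_i}$ a symmetry-restricted Gaussian state, and write $\ket{s_i}=\ket{g_i}_{1256}\otimes(\alpha_i\ket{00}+\beta_i\ket{11})_{34}\otimes(\gamma_i\ket{00}+\delta_i\ket{11})_{78}$. Regrouping $\ket M\ket M$ along the splitting $\mathcal H_{1256}\otimes\mathcal H_{34}\otimes\mathcal H_{78}$ gives
\begin{equation*}
  \ket M\ket M=\tfrac12\bigl(\ket{0000}\ket{00}\ket{00}+\ket{0011}\ket{00}\ket{11}+\ket{1100}\ket{11}\ket{00}+\ket{1111}\ket{11}\ket{11}\bigr),
\end{equation*}
the first ket in each term lying in $\mathcal H_{1256}$. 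Projecting both sides onto each of the four orthonormal vectors $\ket{\varepsilon\varepsilon}_{34}\otimes\ket{\eta\eta}_{78}$, $\varepsilon,\eta\in\{0,1\}$, produces four identities $\sum_{i=1}^r(c_i\mu_i)\ket{g_i}=\tfrac12\ket{b}$, where $\mu_i$ is the appropriate product of block amplitudes ($\alpha_i\gamma_i$, $\alpha_i\delta_i$, $\beta_i\gamma_i$, $\beta_i\delta_i$ respectively) and $\ket b$ ranges over the distinct computational basis states $\ket{0000},\ket{0011},\ket{1100},\ket{1111}$ on $\{1,2,5,6\}$. Thus four linearly independent vectors each lie in $\mathrm{span}\{\ket{g_1},\dots,\ket{g_r}\}$, forcing $r\ge 4$. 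Together with the upper bound this gives $\tilde\chi_G(\ket M\ket M)=4$.

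I expect the structural lemma to be the crux: showing that saturation of $\langle O_j\rangle=1$ \emph{forces} the two relevant qubits off as a pure Gaussian factor is what makes the dimension count bite, and it genuinely uses the Wick/Pfaffian characterisation of pure Gaussian states (the bound $|\mathrm{Pf}(M')|\le1$ on principal Pfaffians) rather than only the amplitude-level constraints. Everything after the lemma is elementary linear algebra, and no separate analysis of $r=1,2,3$ is needed, since the dimension count rules out all $r\le 3$ at once.
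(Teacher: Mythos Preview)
Your approach is correct and genuinely different from the paper's. The paper argues by an exhaustive case analysis at the amplitude level: using the independent constraints of Proposition~\ref{prop:constraints} (summarised in Table~\ref{tab:constraints_0}) it supposes a rank-$3$ symmetry-restricted decomposition, splits into cases according to which $a^i_0$ vanish, and in each case derives a contradiction (e.g.\ in the main case by a rank obstruction for a $7\times 7$ matrix factored through a $2$-dimensional space). Your route replaces all of this by a single structural lemma---every $O_1,O_2$-invariant Gaussian state is a qubit tensor product $|g\rangle_{1256}\otimes(\alpha|00\rangle+\beta|11\rangle)_{34}\otimes(\gamma|00\rangle+\delta|11\rangle)_{78}$---after which the lower bound is a one-line dimension count. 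This is cleaner, avoids case-bashing entirely, and visibly generalises: the same argument gives $\tilde\chi_G(|M\rangle^{\otimes k})=2^k$ for all $k$, whereas the paper explicitly notes its method is bespoke to $k=2$.

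One point you should make explicit, since you have swept it under ``so qubits $3,4$ decouple'': Pfaffian saturation gives you that the \emph{fermionic} modes $c_5,\dots,c_8$ split off (the covariance matrix block-decomposes), but that is a priori a fermionic tensor product, not a qubit one, and qubits $3,4$ sit in the middle of the Jordan--Wigner chain. The saving fact is that every quadratic monomial $c_jc_k$ with $j,k\notin\{5,6,7,8\}$ acts on qubits $3,4$ only through $I$ or $Z_3Z_4$ (never a lone $Z_3$ or $Z_4$), so on the $Z_3Z_4=+1$ eigenspace the ``rest'' unitary acts as $\tilde U\otimes I_{34}$ and the qubit factorisation follows; the same remark applies when you peel off qubits $7,8$, and also underlies your claim that the residual $|g\rangle_{1256}$ is Gaussian for the Jordan--Wigner Majoranas on $\{1,2,5,6\}$. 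None of this is hard, but it is the one place where the argument genuinely uses more than the Wick/Pfaffian bound, and it deserves a sentence.
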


We defer a full proof to~\autoref{sec:app.rank}, but give some discussion here.

The symmetry is chosen in order to greatly simplify the constraints on each of the Gaussian states.
Let $c_i \ket{s} = \sum_{x} a_x \ket{x}$ be a non-normalized Gaussian state with non-zero amplitudes only for computational-basis states $\ket{x}$ which are invariant under $O_j$. 
For $a_0 \neq 0$, we are left almost exclusively with constraints which only have 2 terms. 
We can visualise the constraints in a grid, wherein amplitudes in the body of the grid are defined in terms of the leading entries of their row and column.
\autoref{tab:constraints_0} represents all the constraints present on a restricted Gaussian state.

\begin{table}[t]
    \centering
    \begin{tabular}{c|ccccccc}
         0  & 3  & 65  & 66  & 129 & 130 & 192 & 195 \\\hline
         12 & 15 & 77  & 78  & 141 & 142 & 204 & 207 \\
         20 & 23 & 85  & 86  & 149 & 150 & 212 & 215 \\
         24 & 27 & 89  & 90  & 153 & 154 & 216 & 219 \\
         36 & 39 & 101 & 102 & 165 & 166 & 228 & 231 \\
         40 & 43 & 105 & 106 & 169 & 170 & 232 & 235 \\
         48 & 51 & 113 & 114 & 177 & 178 & 240 & 243 \\
         60 & 63 & 125 & 126 & 189 & 190 & 252 & 255 \\
    \end{tabular}
    \caption{Constraints on an 8-qubit, symmetry restricted Gaussian state, assuming $a_0 \neq 0$. The constraints are recovered by reading off a component in the body as the product of its row header and column header, divided by $a_0$. E.g. $a_{15} = a_3 a_{12} / a_0$, $a_{90} = a_{66} a_{24} / a_0$ etc.}
    \label{tab:constraints_0}
\end{table}

The only exceptions are a pair of constraints which have 4 terms in.
These can be viewed as fixing the values of the last entry in the first row and column respectively:
\begin{equation}
    \begin{aligned}
        a_0 a_{195} - a_3 a_{192} + a_{65} a_{130} - a_{66} a_{129} &= 0 \\
        a_0 a_{60} - a_{12} a_{48} + a_{20} a_{40} - a_{24} a_{36} &= 0
    \end{aligned}
\end{equation}

If $a_0 = 0$, then we must delete either the first row or the first column.
This is because we have from the second column of the grid:
$$
a_3 a_{12} = a_3 a_{20} = a_3 a_{24}= a_3 a_{36}= a_3 a_{40}= a_3 a_{48}= a_3 a_{60} = 0
$$
Thus either every term in the first column vanishes, or $a_3 = 0$. 
Proceeding similarly along the columns, we see that either the first column or first row must all vanish.

The table is still valid, however, as long as the value in the upper-leftmost entry is non-zero.
For example, we might have $a_{12} = a_{20} = a_{24}= a_{36}= a_{40}= a_{48}= a_{60} = 0$, $a_3 \neq 0$. Then equations are read off like $a_{77} = a_{65}a_{15}/a_3$.

The proof is essentially an exhaustive case bash. 
We use the fact that for $\ket{s_2}$, each term in the body of \autoref{tab:constraints_0} is  defined twice: 
once by the requirement that $\ket{s_2}$ is Gaussian; 
and a second time by the requirement that the $c_i \ket{s_i}$ sum to two copies of the magic state.
This gives a system of equations which we prove is insoluble.

Let $c_i \ket{s_i} = \sum_x a^i_x \ket{x}$. 
We first consider the case $a_0^i \neq 0$ for $i = 1,\,2,\,3$. 
This case allows for a somewhat neat matrix argument. Cases where some $a_0^i = 0$ become more complicated, spawning sub-cases as we must choose which row or column to delete.
Nonetheless, it is possible to find a contradiction in all cases. 

We note some obvious limitations of this argument.

Firstly, there is little justification for the symmetry reduction beyond some appeal to the symmetry of the problem. 
Without the reduction, the algebraic constraints are hard to work with, with no easy concept of independent vs. dependent components.
It therefore seems unlikely that our proof techniques would straightforwardly extend to the unrestricted case.

Also, the method is very much bespoke to the case of two copies of the magic state. 
With any more copies, the number of terms becomes too unwieldy and it is impractical to unwind all the definitions, even for the reduced case. We therefore expect that further results on lower bounds for the Gaussian rank will require new techniques that we have not considered here.

\section*{Acknowledgements}
SS acknowledges support from
the Royal Society University Research Fellowship and
“Quantum simulation algorithms for quantum chromodynamics” grant (ST/W006251/1) and EPSRC Reliable and Robust Quantum Computing grant (EP/W032635/1).
\bibliography{gaussian_states_references}
\bibliographystyle{quantum}
\appendix
\section{Triples of Gaussian states} \label{sec:app.triple}
\begin{prop}
    The manifold of solutions to the system of equations 
    \begin{equation}
        \tilde{\mathcal{F}} = \left\{
        \sum_{i = 2}^{4} (-1)^i a_{\overline{0}^{k_1,k_i}} a_{\overline{y}^{k_1,k_i}} = 0 
        : \abs{x} = 4,\, D(0,\,x) = \{k_i : i = 1,\,\ldots,\,4\}
        \right\} \label{eq:constraint_sys}
    \end{equation}
    has dimension $2n -3$.
\end{prop}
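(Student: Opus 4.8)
The plan is to recognise $\tilde{\mathcal{F}}$ as the Plücker relations of a Grassmannian. Introduce coordinates $p_{ij} \coloneqq a_{\overline{0}^{i,j}}$ for the weight-$2$ amplitude supported on bits $1 \le i < j \le n$; these are the only amplitudes appearing in $\tilde{\mathcal{F}}$, so in particular $a_0$ is left entirely free and plays no role in the count. For the weight-$4$ string $x$ with $D(0,x) = \{k_1 < k_2 < k_3 < k_4\}$ one has $\abs{\overline{x}^{k_1,k_i}} = 2$, and unpacking the surviving supports shows the corresponding constraint is exactly $p_{k_1k_2}p_{k_3k_4} - p_{k_1k_3}p_{k_2k_4} + p_{k_1k_4}p_{k_2k_3} = 0$. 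These are precisely the three-term Plücker relations, which cut out the affine cone over the Grassmannian $\mathrm{Gr}(2,n) \subset \mathbb{P}^{\binom{n}{2}-1}$ inside $\mathbb{C}^{\binom{n}{2}}$. Since $\mathrm{Gr}(2,n)$ is irreducible of dimension $2(n-2)$, the affine cone — hence the solution set of $\tilde{\mathcal{F}}$ — is irreducible of dimension $2(n-2)+1 = 2n-3$. The cone is singular at the origin, so ``manifold of solutions'' here should be read as the variety of solutions, which is a genuine manifold of dimension $2n-3$ away from that point.

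To keep the argument self-contained and in the paper's computational style, I would instead eliminate variables directly. After permuting the qubit labels one may assume $p_{12} \neq 0$; by symmetry the solution set minus the origin is covered by the charts $\{p_{ij} \neq 0\}$, so it suffices to compute the dimension of one such chart. The $\binom{n-2}{2}$ constraints coming from the weight-$4$ strings with support $\{1,2\} \cup \{k,l\}$, $3 \le k < l \le n$, read $p_{12}p_{kl} - p_{1k}p_{2l} + p_{1l}p_{2k} = 0$, and therefore solve for every $p_{kl}$ with $3 \le k < l \le n$ as $p_{kl} = (p_{1k}p_{2l} - p_{1l}p_{2k})/p_{12}$.

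It then remains to verify that every other constraint in $\tilde{\mathcal{F}}$ is automatically satisfied after this substitution. Writing $w = p_{12}$, $u_i = p_{1i}$, $v_i = p_{2i}$ for $i \ge 3$, the substitution makes $p_{ij} = (u_iv_j - u_jv_i)/w$ the $(i,j)$ minor (up to the common scalar $1/w$) of the $2 \times (n-2)$ matrix with columns $(u_i, v_i)$. A constraint supported on a $4$-set meeting $\{1,2\}$ in exactly one point then collapses to a sum of degree-$3$ monomials that cancel in pairs, and a constraint supported on a $4$-set disjoint from $\{1,2\}$ becomes a three-term Plücker relation among the $2 \times 2$ minors of a $2 \times m$ matrix, which is a polynomial identity. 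Hence the chart $\{p_{12} \neq 0\}$ of the solution set is parametrised bijectively by $(p_{12}, p_{13}, \dots, p_{1n}, p_{23}, \dots, p_{2n}) \in \mathbb{C}^{*} \times \mathbb{C}^{2(n-2)}$, a smooth manifold of dimension $1 + 2(n-2) = 2n-3$; gluing the charts (and noting the origin changes nothing) gives dimension $2n-3$, matching the count $\binom{n}{2} - \binom{n-2}{2} = 2n-3$ anticipated in the main text.

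The main obstacle is this last verification: that the constraints \emph{not} involving $p_{12}$ are consequences of those that do. Conceptually it is immediate once the $2 \times 2$-minor structure is recognised, but carrying it out honestly requires the degree-$4$ Plücker identity for the $4$-sets disjoint from $\{1,2\}$, which is the one genuinely computational point and where I would spend the bulk of the write-up — or, alternatively, one can simply cite the classical fact that the three-term relations generate the Plücker ideal of $\mathrm{Gr}(2,n)$ and invoke its known dimension.
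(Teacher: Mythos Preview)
Your proposal is correct, and the core elimination step---fixing one weight-$2$ amplitude to be nonzero and solving the $\binom{n-2}{2}$ constraints that contain it---is exactly what the paper does in Appendix~\ref{sec:app.triple}. Where you diverge is in the verification that the remaining constraints are redundant, and here your route is genuinely different and cleaner.

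The paper never names the Pl\"ucker relations. After solving for the ``far'' amplitudes $a_{[c_0,c_1]}$ with $\{c_0,c_1\}\cap\{b_0,b_1\}=\emptyset$, it substitutes into the constraints coming from weight-$4$ strings at distance $4$ from $[b_0,b_1]$ (your $4$-sets meeting $\{1,2\}$ in one point), expands to six terms, pairs them off, and then runs a case analysis on the relative orderings of the indices to check that each pair's signs cancel. It then asserts that the distance-$6$ case (your $4$-sets disjoint from $\{1,2\}$) is similar. Your observation that the substitution $p_{kl}=(u_kv_l-u_lv_k)/w$ exhibits every $p_{ij}$ as a $2\times 2$ minor of a single $2\times(n-2)$ matrix replaces all of this sign-chasing with a one-line appeal to the fact that the three-term Pl\"ucker relations are identities among $2\times 2$ minors. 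The paper's approach is fully self-contained but laborious; yours imports a small piece of classical algebraic geometry and in return avoids the case analysis entirely, and moreover makes the irreducibility of the solution set transparent. Both arrive at $\binom{n}{2}-\binom{n-2}{2}=2n-3$ by the same arithmetic.

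One small remark: your aside that ``$a_0$ is left entirely free and plays no role in the count'' is correct but could confuse a reader into adding $1$ to the dimension. The appendix proposition is stated purely in the $\binom{n}{2}$ weight-$2$ variables, so it is cleanest simply not to mention $a_0$ at all.
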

For convenience, we introduce some new notation. For any $x \in \mathcal{A}_n$ with $\abs{x} = w$ and with 1s in locations $i_0,\,i_1,\ldots,\,i_{w-1}$, we will write $x = [i_0,\,i_1,\ldots,\,i_{w-1}]$. 
In expressions such as $[b_0,\,b_1,\,c_0,\,c_1]$ we will always have $b_0 < b_1$, $c_0 < c_1$ and $\{b_0,\,b_1\} \cap \{c_0,\,c_1\} = \emptyset$, but we may not know the relative ordering of the $b_i$ and $c_i$.
\begin{proof}
    Suppose $a_x \neq 0$ for some $x = [b_0,\,b_1]$, which we require for a non-trivial linearly dependent triple.
    
    Define $W_k = \{ y \in \mathcal{A}_n : \abs{y} = 4, \, d(x, y) = k \}$ for $k = 2,\,4,\,6$. 

    First, let $y \in W_2$. Then $y = [b_0,\,b_1,\,c_0,\,c_1]$. Since $a_x \neq 0$, we can write
    \begin{align}
        a_{[c_0,\, c_1]} &= \frac{1}{a_{[b_0,\,b_1]}} \bigl(
        s_0^{b_0,\,b_1,\,c_0,\,c_1} a_{[b_0,\,c_0]}a_{[b_1,\,c_1]} + s_1^{b_0,\,b_1,\,c_0,\,c_1} a_{[b_0,\,c_1]}a_{[b_1,\,c_0]}
        \bigr) \label{eq:fixed_var}\\
        s_0 &= \begin{cases}
            -1 \quad \text{if} \quad b_0<c_0<c_1<b_1 \quad \text{or} \quad c_0<b_0<b_1<c_1\\
            +1 \quad \text{otherwise}
        \end{cases} \nonumber \\
        s_1 &= \begin{cases}
            -1 \quad \text{if} \quad b_0<b_1<c_0<c_1 \quad \text{or} \quad c_0<c_1<b_0<b_1\\
            +1 \quad \text{otherwise}
        \end{cases} \nonumber
    \end{align}

  \autoref{eq:fixed_var} fixes the $\binom{n-2}{2}$ amplitudes $a_{z} = a_{[c_0,\,c_1]}$ with $d(x,\,z) = 4$. We now show that all other equations in $\tilde{\mathcal{F}}$ are identically satisfied. 
    Thus there are $\binom{n}{2} - \binom{n-2}{2} = 2n - 3$ free variables, giving the desired manifold dimension.

    Let $y \in W_4$. Then $y = [b_k,\,d_0,\,d_1,\,d_2]$ for $k = 0$ or 1. The corresponding equation in \autoref{eq:constraint_sys} reads
    \begin{align*}
        &t_0^{b_k, d_0,d_1,d_2} a_{[b_k, \, d_0]}a_{[d_1, \, d_2]} 
        + t_1^{b_k, d_0,d_1,d_2} a_{[b_k, \, d_1]}a_{[d_0, \, d_2]} 
        + t_2^{b_k, d_0,d_1,d_2} a_{[b_k, \, d_2]}a_{[d_0, \, d_1]} = 0\\
        &t_0 = \begin{cases}
            -1 \quad \text{if} \quad d_0<d_1<b_k<d_2\\
            +1 \quad \text{otherwise}
        \end{cases}\\
        &t_1 = \begin{cases}
            -1 \quad \text{if} \quad b_k<d_0<d_1<d_2 \quad \text{or} \quad d_0 < d_1<d_2<b_k\\
            +1 \quad \text{otherwise}
        \end{cases}\\
        &t_2 = \begin{cases}
            -1 \quad \text{if} \quad d_0<b_k<d_1<d_2\\
            +1 \quad \text{otherwise}
        \end{cases}
    \end{align*}
    
    Each of $a_{[d_1, \, d_2]}, a_{[d_0, \, d_2]} \,\&\, a_{[d_0, \, d_1]}$ are fixed by~\autoref{eq:fixed_var}. Substituting in, (multiplying out $a_{[b_0,\,b_1]}$), the equation now reads

    \begin{multline}
        t_0^{b_k, d_0,d_1,d_2}a_{[b_k, \, d_0]} \bigl(
        s_0^{b_0,b_1,d_1,d_2} a_{[b_0,\, d_1]}a_{[b_1,\, d_2]} + 
        s_1^{b_0,b_1,d_1,d_2} a_{[b_0,\, d_2]}a_{[b_1,\, d_1]}
        \bigr)\\
        + 
        t_1^{b_k, d_0,d_1,d_2}a_{[b_k, \, d_1]} \bigl(
        s_0^{b_0,b_1,d_0,d_2} a_{[b_0,\, d_0]}a_{[b_1,\, d_2]} + 
        s_1^{b_0,b_1,d_0,d_2} a_{[b_0,\, d_2]}a_{[b_1,\, d_0]}
        \bigr)\\
        +
        t_2^{b_k, d_0,d_1,d_2}a_{[b_k, \, d_0]} \bigl(
        s_0^{b_0,b_1,d_0,d_1} a_{[b_0,\, d_0]}a_{[b_1,\, d_1]} + 
        s_1^{b_0,b_1,d_0,d_1} a_{[b_0,\, d_1]}a_{[b_1,\, d_0]}
        \bigr) = 0 \label{eq:w4_equation}
    \end{multline}
    Labelling the terms of~\autoref{eq:w4_equation} in order 1 to 6, we can pair them off as follows. If $k = 0$, pair $(1,3),\,(2,5)\, \& \, (4,6)$. If $k=1$, instead pair $(1,6),\,(2,4)\,\&\,(3,5)$. Then it can be shown that each pair sums to 0.
    For example, for $k = 0$, we have:
    \begin{align*}
        (1) + (3) \quad &\propto \quad  
        t_0^{b_0, d_0,d_1,d_2}s_0^{b_0,b_1,d_1,d_2} 
        + t_1^{b_0, d_0,d_1,d_2}s_0^{b_0,b_1,d_0,d_2} \\
        (2) + (5) \quad &\propto \quad  
        t_0^{b_0, d_0,d_1,d_2}s_1^{b_0,b_1,d_1,d_2} + 
        t_2^{b_0, d_0,d_1,d_2}s_0^{b_0,b_1,d_0,d_1} \\
        (4) + (6) \quad &\propto \quad  
        t_1^{b_0, d_0,d_1,d_2}s_1^{b_0,b_1,d_0,d_2} 
        + t_2^{b_0, d_0,d_1,d_2}s_1^{b_0,b_1,d_0,d_1} 
    \end{align*}
    For example, note that $t_0$ and $t_1$ have the same sign only if $d_0 < b_0 < d_1 < d_2$. If this is the case, then $s_0^{b_0,b_1,d_1,d_2}$ and $s_0^{b_0,b_1,d_0,d_2}$ have different signs, as can be seen from checking the 3 possible relative locations of $b_1$. Thus $(1) + (3) = 0$. Similarly for the other pairs. The remaining cases can all be checked similarly.

    A similar argument holds for $y = [d_0,\,d_1,\,d_2,\,d_3] \in W_6$.
\end{proof}

\section{\texorpdfstring{An $\epsilon$-net for the Gaussian states}{} \label{sec:app.net}}
For each $j \in \mathcal{A}_n$, define the regions
$$
S_j = \biggl\{ \ket{s} = \sum_{j' \in \mathcal{A}_n}c_{j'} \ket{j'} \in G: \abs{c_j} \geq \abs{c_{j'}} \quad \forall \, j' \in \mathcal{A}_n \biggr\}
$$
Clearly, the union over $j$ gives all of $G$. 
In each region $S_j$, Gaussian states may be defined by the set of amplitudes with labels within distance 2 of $j$, given by the set $C = \{ c_k : d(j, k) \leq 2 \}$.
For these amplitudes, we have 
\begin{align*}
    \abs{c_j} &\in [2^{\frac{1-n}{2}}, \, 1] \\
    \abs{c_k} &\in [0, \, \abs{c_j}] \quad \ \forall \, c_k \in C \setminus \{ c_j \}
\end{align*}
Let $\mathcal{N}_j = \{\ket{s_i} \} \subset S_j$ be a maximal set of pure states satisfying 
$$
\forall \, \ket{s_i}, \ket{s_{i'}} \in \mathcal{N}_j, \ \exists \ x \in \mathcal{A}_n \text{ s.t. } d(j, x) \leq 2  \quad \abs{\braket{x}{s_i} - \braket{x}{s_{i'}}} \geq \eta
$$
Let $\mathcal{N} = \cup_j \mathcal{N}_j$. We wish to find a suitable $\eta$ so that $\mathcal{N}$ is an $\epsilon$-net. We require the 2 following lemmas:

\begin{lem}\label{lem:eta_bound}
    For $\eta = 2^{-n^2} \epsilon $, we have 
    $\sup_{\ket{s} \in S_j} \inf_{\ket*{\tilde{s}} \in \mathcal{N}_j} \norm{\ket{s} - \ket*{\tilde{s}}}_1 \leq \epsilon \ \forall \, j \in \mathcal{A}_n$.
\end{lem}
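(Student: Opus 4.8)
The plan is to exploit the explicit recursive description of Gaussian states from Proposition~\ref{prop:constraints}: inside the region $S_j$ every amplitude $a_w$ with $d(j,w)\ge 4$ is determined by the ``free'' amplitudes in $C=\{c_k:d(j,k)\le 2\}$ through the constraints $f(j,w)=0$ of Equation~\eqref{eq:indep_constraints_y}, which solve explicitly as $a_w = a_j^{-1}\sum_{i\ge 2}(-1)^i a_{\overline{j}^{k_1,k_i}}\,a_{\overline{w}^{k_1,k_i}}$, where each label $\overline{j}^{k_1,k_i}$ lies at distance $2$ from $j$ and each label $\overline{w}^{k_1,k_i}$ lies at distance $d(j,w)-2$ from $j$. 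Since $\ket*{\tilde{s}}\in\mathcal{N}_j\subseteq S_j$, the same recursion with the same pivot $j$ governs $\ket*{\tilde s}$, so it suffices to show that two Gaussian states in $S_j$ whose free amplitudes agree to within $\eta$ componentwise have $\ell_1$ amplitude distance at most $\epsilon$ (which controls $\norm{\ket{s}-\ket*{\tilde s}}_1$ up to a harmless constant).

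First I would record two elementary facts that hold throughout $S_j$: every amplitude satisfies $\abs{a_k}\le\abs{a_j}\le 1$ (normalisation together with the defining maximality of $\abs{a_j}$ on $S_j$), and $\abs{a_j}\ge 2^{(1-n)/2}$ (normalisation again, since there are at most $2^{n-1}$ even-weight labels). I would then observe that the maximality of $\mathcal{N}_j$ forces, for every $\ket{s}\in S_j$, the existence of some $\ket*{\tilde s}\in\mathcal{N}_j$ with $\abs{\braket{x}{s}-\braket{x}{\tilde s}}<\eta$ for all $x$ with $d(j,x)\le 2$; otherwise $\ket{s}$ itself could be adjoined to $\mathcal{N}_j$ while preserving $\eta$-separation, contradicting maximality.

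The heart of the argument is an induction on $m=\tfrac12 d(j,w)$ bounding $\delta_w\coloneqq\abs{a_w-\tilde a_w}$. For $m\le 1$ (the free amplitudes, including the pivot) $\delta_w\le\eta$ by construction. For $m\ge 2$ I would split the difference of the two solved expressions for $a_w$ into a ``numerator'' part and a ``denominator'' part: using $\abs{a_k},\abs{\tilde a_k}\le 1$, the numerator contributes at most $\abs{a_j}^{-1}\sum_i(\delta_{\overline{j}^{k_1,k_i}}+\delta_{\overline{w}^{k_1,k_i}})$, while the denominator part, via $\abs{a_j^{-1}-\tilde a_j^{-1}}=\abs{a_j-\tilde a_j}/(\abs{a_j}\abs{\tilde a_j})$ together with $\abs{a_j}^{-1}\le 2^{(n-1)/2}$, contributes at most $2^{n-1}\eta$. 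Since each of these sums has at most $n$ terms and the labels $\overline{w}^{k_1,k_i}$ sit at distance $2(m-1)$ from $j$, this yields a recursion of the form $D_m\le 2^{n}(\eta+D_{m-1})$ for $D_m\coloneqq\max_{d(j,w)=2m}\delta_w$, whence $D_m\le 2^{O(n^2)}\eta$ for all $m\le n/2$. Summing over the at most $2^{n-1}$ amplitudes gives $\sum_w\delta_w\le 2^{O(n^2)}\eta$, so taking $\eta=2^{-n^2}\epsilon$ (absorbing the implied constant for large $n$ and checking the finitely many small $n$ directly) makes this $\le\epsilon$, as claimed.

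The main obstacle I anticipate is the bookkeeping of the error growth through the recursion cleanly enough that $2^{-n^2}$ genuinely suffices, and in particular handling the division by $a_j$ uniformly across $S_j$ -- this is exactly where the lower bound $\abs{a_j}\ge 2^{(1-n)/2}$ valid on $S_j$ is indispensable, since without it the implicit functions defining the dependent amplitudes have no uniform Lipschitz bound. The remaining ingredients (the maximality step, the bound $\abs{a_k}\le 1$, and the final geometric summation) are routine.
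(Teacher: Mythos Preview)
Your proposal is correct and follows essentially the same route as the paper: propagate the $\eta$-error on the free amplitudes through the recursion $a_w=a_j^{-1}\sum(-1)^i a_{\overline{j}^{k_1,k_i}}a_{\overline{w}^{k_1,k_i}}$, using $\lvert a_j\rvert\ge 2^{(1-n)/2}$ on $S_j$ to control the division, then sum. The paper carries the constants explicitly---obtaining the sharper recursion $\Delta_w=2^n(2w-1)\Delta_{w-2}$, solving it in closed form via Gamma functions, and arriving at the clean bound $\norm{\ket{s}-\ket*{\tilde s}}_1\le 2^{n^2}\eta$---whereas you stop at $2^{O(n^2)}\eta$ and appeal to ``absorbing the implied constant''; since the lemma fixes $\eta=2^{-n^2}\epsilon$ exactly, you would still need to check that the implied constant is at most $1$, which your recursion $D_m\le 2^n(\eta+D_{m-1})$ does give after unwinding (it yields roughly $2^{n^2/2}\eta$ before the final sum).
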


\begin{lem}\label{lem:net_cardinality_bound}
    The cardinality of $\mathcal{N}_j$ is upper bounded by 
    $\abs{\mathcal{N}_j} \leq 2^{2n^2 + 1}\eta^{-n^2}$.
\end{lem}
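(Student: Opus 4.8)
The plan is to turn the statement into a standard volume–packing bound. By definition of $\mathcal{N}_j$, any two distinct members $\ket{s_i},\ket{s_{i'}}$ differ in modulus by at least $\eta$ on some amplitude whose label lies within Hamming distance $2$ of $j$; there are $m\coloneqq 1+\binom{n}{2}$ such labels. Writing $v^{(i)}$ for the corresponding tuple $(\braket{k}{s_i})_{d(j,k)\le 2}\in\mathbb{C}^{m}$, the defining property of $\mathcal{N}_j$ says exactly that the $v^{(i)}$ are pairwise separated by at least $\eta$ in the metric $d_\infty(v,w)=\max_k\abs{v_k-w_k}$. Since each $\ket{s_i}\in S_j$ is normalised, every amplitude has modulus at most $\abs{c_j}\le 1$, so all the $v^{(i)}$ lie in the closed polydisk $P=\{v\in\mathbb{C}^m:\abs{v_k}\le 1\ \forall\, k\}$. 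Thus $\abs{\mathcal{N}_j}$ is at most the largest cardinality of an $\eta$-separated subset of $(P,d_\infty)$, and it remains to estimate that.

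For this I would identify $\mathbb{C}^m\cong\mathbb{R}^{D}$ with $D=2m$. If $d_\infty(v,w)\ge\eta$, then on the coordinate achieving the maximum the complex difference has modulus $\ge\eta$, hence its real or imaginary part has absolute value $\ge\eta/\sqrt{2}$; so the ordinary $\ell_\infty$-distance of $v,w$ in $\mathbb{R}^D$ is at least $\eta'\coloneqq\eta/\sqrt{2}$. Consequently the open $\ell_\infty$-balls of radius $\eta'/2$ about the $v^{(i)}$ are pairwise disjoint; each is a cube of volume $(\eta')^D$ and each is contained in $[-1-\eta'/2,\,1+\eta'/2]^D$, of volume $(2+\eta')^D$. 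Comparing volumes yields $\abs{\mathcal{N}_j}\le(1+2/\eta')^{D}=(1+2\sqrt{2}/\eta)^{D}$.

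Finally I would clean up the constants. As $m=1+\binom{n}{2}$, we have $D=n^2-n+2\le n^2$ for $n\ge 2$ (and $n\le 1$ is trivial), and since the base exceeds $1$, enlarging the exponent only weakens the inequality, so $\abs{\mathcal{N}_j}\le(1+2\sqrt{2}/\eta)^{n^2}$. Because $\eta=2^{-n^2}\epsilon\le 1$ we get $\eta+2\sqrt{2}<4$, hence $1+2\sqrt{2}/\eta=(\eta+2\sqrt{2})/\eta<4/\eta$, and therefore $\abs{\mathcal{N}_j}\le(4/\eta)^{n^2}=2^{2n^2}\eta^{-n^2}\le 2^{2n^2+1}\eta^{-n^2}$, as claimed.

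The routine parts really are routine; the one place needing care is the reduction to a clean packing count: we must not lose more than the single factor $\sqrt{2}$ in passing from the complex ``max-modulus'' separation to a real $\ell_\infty$ separation, and we must keep the base of the exponential below $4/\eta$ so that the stated constant $2^{2n^2+1}$ emerges rather than something like $8^{n^2}\eta^{-n^2}$. A more elementary alternative is to tile each complex coordinate by small squares and count the product cells that meet $P$, but the constants come out cleanest via the disjoint-balls volume bound above.
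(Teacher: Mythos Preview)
Your argument is correct and is at heart the same volume/packing idea the paper uses: map each net element to its tuple of ``free'' amplitudes $(c_k)_{d(j,k)\le 2}$, observe these tuples are $\eta$-separated, and bound their number by comparing a containing region's volume to that of disjoint balls. The paper carries this out directly in complex coordinates, integrating over the annulus $|c_j|\in[2^{(1-n)/2},1]$ and disks $|c_k|\le |c_j|$ that define $S_j$, then dividing by a disk volume $(\pi\eta^2)^{\binom{n}{2}}$. Your version is tidier: you discard the $S_j$-specific geometry in favour of the crude unit polydisk, pass to real $\ell_\infty$ at the cost of a single $\sqrt{2}$, and read off the standard cube-packing bound $(1+2\sqrt{2}/\eta)^{D}$ with $D=n^2-n+2\le n^2$. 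Both routes land on the same $2^{2n^2+1}\eta^{-n^2}$; yours just avoids the integration and the slightly awkward bookkeeping of the annular region.
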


\begin{proof}[Proof of \autoref{prop:eps_net}]
    By~\autoref{lem:eta_bound}, we have
    \begin{align*}
        \sup_{\ket{s} \in G} \inf_{\ket*{\tilde{s}} \in \mathcal{N}} \norm{\ket{s} - \ket*{\tilde{s}}}_1 
        &\leq \sup_{j \in \mathcal{A}_n} \sup_{\ket{s} \in S_j} \inf_{\ket*{\tilde{s}} \in \mathcal{N}_j} \norm{\ket{s} - \ket*{\tilde{s}}}_1 < \epsilon
    \end{align*}
    So $\mathcal{N}$ is a valid $\epsilon$-net.

    By~\autoref{lem:net_cardinality_bound} with $\eta = 2^{-n^2}\epsilon$, we have
    \begin{align*}
        \abs{\mathcal{N}} = \sum_j \abs{\mathcal{N}_j} \leq 2^{n-1} \cdot 2^{2n^2 +1} 2^{n^4}\epsilon^{-n^2}
    \end{align*}

    Let $\epsilon = 2^{-l}$. Then $\log_2 \abs{\mathcal{N}} \leq n^4 + 2n^2 + n+ ln^2$.
\end{proof}

\begin{proof}[Proof of~\autoref{lem:net_cardinality_bound}]
    Without loss of generality consider the region $S_0$ and sub-net $\mathcal{N}_0$.
    
    We use a volume argument to estimate the number of states in the sub-net. 
    Note that each state $\ket{s_i} \in \mathcal{N}_j$ occupies a disjoint region of $\binom{n}{2}$-dimensional (complex) space with volume $(\pi\eta^2)^{\binom{n}{2}}$.

    The centre of each of these disjoint regions lies in $S_0$. 
    The volume of the total region that they are contained in can be estimated by integrating over each free component $c_x$ with $\abs{x} \leq 2$. Viewed in $\mathbb{R}^2$,
    $c_0$ lies in an annulus with inner radius $2^{\frac{1-n}{2}} - \frac{\eta}{2}$ and outer radius $1+ \frac{\eta}{2}$. 
    Each $c_x$ lies in a disk of radius $\abs{c_0} + \frac{\eta}{2}$.
    Thus we see that the volume is no greater than
    \begin{align*}
        V &= 2\pi \int_{2^{\frac{1-n}{2}} - \frac{\eta}{2}}^{1 + \frac{\eta}{2}} dr_0 \left\{
        \left(\pi(r_0 + \frac{\eta}{2})^2\right)^{\frac{n(n-1)}{2}} 
        \right\}\\
        &= 2\pi^{\frac{n(n-1)}{2}+1} \left[
        \frac{(r_0 + \frac{\eta}{2})^{n(n-1)+1}}{n(n-1)+1} 
        \right]_{2^{\frac{1-n}{2}} - \frac{\eta}{2}}^{1 + \frac{\eta}{2}}\\
        &= 2\pi^{\frac{n(n-1)}{2}+1} \left(
        \frac{(1 + \eta)^{n(n-1)+1}}{n(n-1)+1} - \frac{(2^{\frac{1-n}{2}})^{n(n-1)+1}}{n(n-1)+1}
        \right)\\
        &\leq 2 \cdot \pi^{\frac{n^2}{2}} \cdot 2^{n(n-1)+1} \\
        &\leq 2 \cdot 2^{n^2} \cdot 2^{n^2 - n} = 2^{2n^2- n +1}
    \end{align*}

    The number of states is then no more than
    \begin{align*}
        \abs{\mathcal{N}_j} 
        &\leq \frac{V}{(\pi \eta^2)^{\binom{n}{2}}} \\
        &\leq 2^{2n^2 - n + 1}2^{- n(n-1)/2}\eta^{-n(n-1)} \\
        &\leq 2^{\frac{3}{2}n^2 - \frac{1}{2}n + 1}\eta^{-n^2} \\
        &\leq 2^{2n^2+1}\eta^{-n^2}\qedhere
    \end{align*}
\end{proof}

\begin{proof}[Proof of~\autoref{lem:eta_bound}]
    We will frequently use
    $$
    \abs{AB - A'B'} \leq \abs{A- A'}\abs{B} + \abs{B - B'}\abs{A'}
    $$
    and its generalisation to longer strings. Note also that all amplitudes have modulus $\leq 1$, so we can always take further
    $$
    \abs{AB - A'B'} \leq \abs{A- A'} + \abs{B - B'}
    $$
    when all the quantities involved are amplitudes.

    By definition, for any state $\ket{s} \in S_0$, $\exists \ket*{\tilde{s}} \in \mathcal{N}_j$ such that $\forall \, \abs{x} \leq 2$:
    \begin{gather*}
        c_x \coloneqq \braket{x}{s} \quad \tilde{c_x} \coloneqq \braket*{x}{\tilde s}\\
        \abs{c_x - \tilde{c_x}} \eqqcolon \abs{\delta_x} \leq \eta \eqqcolon \Delta_2 
    \end{gather*}
    For $\abs{x} \geq 4$, we find a recursion relation, bounding $\abs{\delta_x} \leq \Delta_w$ for $\abs{x} = w, \, w = 4,\, 6,\, 8, \dots, n$ (assume $n$ even for simplicity).
    We proceed by induction. Suppose we have bounded $\abs{\delta_x} \leq \Delta_{w'}$ for all $\abs{x} = w'$, $w' = 2,\, 4,\, \dots, w-2$. 
    
    Then for any $\abs{x} = w$, we have a constraint equation $c_0 c_x = f_x$, where $f_x$ contains $(w-1)$ terms of the form $c_{i}c_{i'}$ where $\abs{i} = 2$ and $\abs{i'} = w - 2$. Again using tildes to denote quantities relating to states on the net, we have
    \begin{align}
        \abs{\delta_x} &= \abs{c_x - \tilde{c}_x} \nonumber \\
        &= \abs{\frac{f_x}{c_0} - \frac{\tilde{f}_x}{\tilde{c}_0}} \nonumber \\
        &\leq \abs{c_0}^{-1} \abs{\tilde{c}_0}^{-1} \left( 
        \abs{\tilde{c}_0 - c_0} \abs{f_x} + \abs{f_x - \tilde{f}_x} \abs{\tilde{c}_0}
        \right)   \label{eq:delta_x}
    \end{align}
    We can bound each term in this expression. Assume that $\eta \leq 2^{\frac{1-n}{2}} - 2^{\frac{-(1+n)}{2}}$.
    Then we have:
    \begin{align*}
        \abs{c_0}^{-1} &\leq 2^{\frac{n-1}{2}} \\
        \abs{\tilde{c}_0}^{-1} &\leq (2^{\frac{1-n}{2}}- \eta)^{-1} \leq 2^{\frac{n+1}{2}} \\
        \abs{\tilde{c}_0 - c_0} &\leq \Delta_2 \\
        \abs{f_x} &= \abs{c_0} \abs{c_x} \leq 1 \\
        \abs{\tilde{c}_0} &\leq 1 \\
        \abs{f_x - \tilde{f}_x} &\leq (w-1)(\Delta_2 + \Delta_{w-2})
    \end{align*}
    Plugging in to~\autoref{eq:delta_x}, we obtain:
    \begin{align*}
        \abs{\delta_x} \leq 2^n (\Delta_2 + (w-1)(\Delta_2 + \Delta_{w-2}))
    \end{align*}
    Since this is true for any $x$ with $\abs{x} = w$, and noting that $\Delta_2 \leq \Delta_4 \ldots \leq \Delta_n$, we have
    $\Delta_w = 2^n (2w-1)\Delta_{w-2}$.
    The solution to this recurrence relation with $\Delta_2 = \eta$ is
    $$\Delta_w = \frac{1}{3}2^{\frac{1}{2}(n+2)w - n}\frac{\Gamma(\frac{w}{2} + \frac{3}{4})}{\Gamma(\frac{3}{4})} \eta $$.

    We then have
    \begin{align*}
        \norm{\ket{s} - \ket*{\tilde{s}}}_1
        &= \sum_{x \in \mathcal{A}_n} \abs{c_x - \tilde{c}_x} \\
        & \leq \sum_{k = 0}^{\frac{n}{2}}\binom{n}{2k} \Delta_{2k} \\
        & \leq 2^{n-1} \Delta_n \\
        & \leq \frac{1}{3} \cdot 2^{\frac{1}{2}n^2 + n - 1} \cdot \frac{\Gamma(\frac{n}{2}+\frac{3}{4})}{\Gamma(\frac{3}{4})} \eta
    \end{align*}
    Using $\Gamma(z+1) = z\Gamma(z)$, we may obtain 
    $$
    \frac{\Gamma(\frac{n}{2}+\frac{3}{4})}{\Gamma(\frac{3}{4})} 
    \leq \left(\frac{n}{2}\right)^{\frac{n}{2}} 
    \leq 2^{\frac{n}{2}(\log_2(n) - 1)}
    $$ 
    So finally we have 
    \begin{equation*}
        \norm{\ket{s} - \ket*{\tilde{s}}}_1 \leq 2^{\frac{1}{2}(n^2 + \log_2(n) + 1)}\eta \leq 2^{n^2}\eta
        \qedhere
    \end{equation*}
\end{proof}
\section{Properties of optimal dual witnesses for the Gaussian extent \label{sec:app.dual}}
Recall the equivalent definitions of normal cones given in the body:
\begin{align}
    C_y &\coloneqq \left\{ 
    \ket{\Psi} \in \mathcal{H}_n : \Re\braket{\Psi}{y} = \max_{p \in M_G} \Re\braket{\Psi}{p}
    \right\} \label{eq:cone_def_1}\\
    &= \text{cone}\left\{ \ket{s} \in G : \braket{s}{y} = 1 \right\} \label{eq:cone_def_2}
\end{align}

Suppose $\ket{y} \in \text{relint} (M_G)$. 
Then $\exists \, \alpha > 1$ s.t. $\alpha \ket{y} \in M_G$. 
Then $\forall \, \ket{\Psi} \in \mathcal{H}_n$:
$$
\Re\bra{ \Psi} \alpha \ket{y}
> \Re\braket{ \Psi}{ y } \implies C_y = \emptyset
$$
So we can restrict our attention only to points $y$ on the boundary $\partial M_G$.

Note that $M_G$ is a bounded, compact, convex set. The Krein-Milman Theorem allows us to express such sets in terms of convex combinations of their extreme points. 

\begin{thm}[Krein-Milman Theorem, see e.g. \cite{Rudin_1991}]
    Suppose $X$ is a compact, convex subset of a locally convex vector space. Then $X$ is equal to the closed convex hull of its extreme points. Moreover, for $B \subseteq X$, $X$ is equal to the closed convex hull of $B$ if and only if extreme$(X) \subseteq \text{closure}(B)$. \label{thm:km}
\end{thm}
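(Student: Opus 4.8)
The plan is to reproduce the classical argument (as in \cite{Rudin_1991}), whose two ingredients are a maximality argument over \emph{extreme sets} and the Hahn--Banach separation theorem; I record it here mainly to flag where the hypotheses enter. Call a nonempty closed $S\subseteq X$ an \emph{extreme set} (face) of $X$ if $x=ty+(1-t)z$ with $y,z\in X$, $t\in(0,1)$, $x\in S$ forces $y,z\in S$; an extreme point is a singleton extreme set. Two trivial facts carry the proof: (i) an extreme subset of an extreme set of $X$ is again an extreme set of $X$; and (ii) for a continuous linear functional $\Lambda$ and $\mu=\max_{x\in S}\Re\Lambda(x)$ (attained since $S$ is compact), the slice $\{x\in S:\Re\Lambda(x)=\mu\}$ is a nonempty extreme set of $S$, hence of $X$ by (i).

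First I would prove $X$ has an extreme point. Partially order the extreme sets of $X$ by reverse inclusion; the intersection of a chain of extreme sets is closed, is again an extreme set, and is \emph{nonempty} by the finite intersection property of the compact space $X$ --- this is the one essential use of compactness --- so Zorn's lemma gives a minimal extreme set $S$. If $S$ contained two distinct points, local convexity (and Hausdorffness) lets Hahn--Banach produce a continuous $\Lambda$ separating them, and then the slice in (ii) is a strictly smaller nonempty extreme set of $X$, contradicting minimality; hence $S$ is a point. For the full statement, set $K=\overline{\mathrm{conv}}(\mathrm{extreme}(X))\subseteq X$ and suppose some $x_0\in X\setminus K$; separate $x_0$ from the closed convex set $K$ to get a continuous $\Lambda$ with $\Re\Lambda(x_0)>\sup_K\Re\Lambda$, and apply the previous step to the compact convex slice $X_\Lambda=\{x\in X:\Re\Lambda(x)=\max_X\Re\Lambda\}$: it has an extreme point, which by (i) is an extreme point of $X$ and so lies in $K$; but every point of $X_\Lambda$ has $\Re\Lambda$-value $\ge\Re\Lambda(x_0)>\sup_K\Re\Lambda$, so $X_\Lambda\cap K=\emptyset$ --- a contradiction. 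Thus $K=X$.

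For the addendum, the ``if'' direction is formal: if $\mathrm{extreme}(X)\subseteq\overline B$ and $B\subseteq X$, then $X=\overline{\mathrm{conv}}(\mathrm{extreme}(X))\subseteq\overline{\mathrm{conv}}(\overline B)=\overline{\mathrm{conv}}(B)\subseteq X$. The ``only if'' direction is Milman's theorem: assume $X=\overline{\mathrm{conv}}(B)$, fix an extreme point $e$, and note $\overline B\subseteq X$ is compact. Given a balanced convex open neighbourhood $V$ of $0$, cover $\overline B$ by finitely many translates $b_i+V$ and put $K_i=\overline{\mathrm{conv}}\bigl(\overline B\cap(b_i+\overline V)\bigr)$, which is compact, convex, and contained in $b_i+\overline V$. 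Because the convex hull of a finite union of compact convex sets is compact (a continuous image of a simplex times $\prod_i K_i$), $X=\mathrm{conv}\bigl(\bigcup_i K_i\bigr)$, so $e=\sum_i t_ik_i$ with $k_i\in K_i$, $t_i\ge0$, $\sum_i t_i=1$; extremality forces $e=k_j\in b_j+\overline V$ for some $j$, whence $e\in\overline B+\overline V$. Letting $V$ shrink through a neighbourhood base of $0$ and using that $\overline B$ is closed gives $e\in\overline B$.

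The proof is textbook, so the only ``obstacles'' are bookkeeping: checking that Zorn's lemma applies to a nonempty poset whose chains have upper bounds (which is exactly compactness via the finite intersection property), invoking Hahn--Banach in the form that separates a point from a closed convex set (this, with Hausdorffness, is the sole structural demand on the ambient space), and, in Milman's direction, justifying compactness of $\mathrm{conv}(K_1\cup\dots\cup K_m)$ and the passage to the limit over the neighbourhood filter. In the paper itself one would simply cite \cite{Rudin_1991}.
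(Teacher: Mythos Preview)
Your proof is the standard textbook argument and is correct; as you yourself note at the end, the paper does not prove this statement at all but simply cites it from \cite{Rudin_1991}. There is nothing to compare: the paper treats Krein--Milman as a black-box classical result, and your write-up faithfully reproduces the Rudin proof (Zorn over extreme sets plus Hahn--Banach for the main assertion, and Milman's converse for the addendum).
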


Let $E$ denote the set of extreme points. Then for any $\ket{x} \in \partial M_G$, we can write
$$
\ket{x} = \sum_i c_i \ket{y_i}
$$
where $\ket{y_i} \in E$ and $c_i \in \mathbb{R}_+$ with $\sum_i c_i = 1$. Note that by Carath\'eodory's theorem, this sum can be taken to have only $d+1$ terms, and in particular is a finite sum, not an integral.

Let $\ket{\Psi} \in C_x$. Then
\begin{align*}
    \Re\braket{ \Psi}{  x }
    &= \Re\left( \sum_i c_i \braket{ \Psi}{  y_i } \right) \\
    &\leq \sum_i c_i \Re\braket{ \Psi}{ x } \\
    &= \Re\braket{ \Psi}{  x }
\end{align*}
where on the second line we used that the $c_i$ are real and also that $\ket{\Psi} \in C_x$, and in the third line we used that the $c_i$ sum to unity.

We must therefore have equality on the second line, so $\Re\braket{ \Psi}{ x } = \Re\braket{ \Psi}{  y_i }$. 
By~\autoref{eq:cone_def_1}, $\ket{x}$ maximises $\Re\braket{\Psi}{p}$ over feasible $\ket{p}$. 
For each $i$, $\ket{y_i}$ is feasible, and also attains this maximum. 
Therefore, $\ket{\Psi} \in C_{y_i}$.

Therefore, $\ket{\Psi} \in \mathcal{H}_n$ has non-unique optimal dual witnesses iff $\ket{\Psi}$ lies in the intersection of some $C_{y_i}$ for $\ket{y_i} \in E$. We now show that the $C_{y_i}$ intersect only on their boundaries.

\begin{prop} \label{prop:unique_cone}
    For distinct $\ket{y}, \, \ket{y'} \in E$,
    $\ket{\Psi} \in \text{relint}(C_y) \implies \ket{\Psi} \notin C_{y'}$.
\end{prop}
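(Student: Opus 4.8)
The plan is to argue by contradiction, using the relative-interior hypothesis to perturb $\ket{\Psi}$ inside $C_y$. Suppose that for distinct $\ket{y},\ket{y'}\in E$ we had $\ket{\Psi}\in\text{relint}(C_y)\cap C_{y'}$, and set $\mu\coloneqq\max_{\ket{p}\in M_G}\Re(\braket{\Psi}{p})$. Since $\ket{\Psi}$ lies in both normal cones, $\Re(\braket{\Psi}{y})=\Re(\braket{\Psi}{y'})=\mu$, so $\ket{y}$ and $\ket{y'}$ both lie in the exposed face $F_\Psi\coloneqq\{\ket{p}\in M_G:\Re(\braket{\Psi}{p})=\mu\}$. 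The goal is then to show $F_\Psi=\{\ket{y}\}$, which is immediately contradicted by $\ket{y'}\in F_\Psi$ with $\ket{y'}\neq\ket{y}$.

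The first step extracts linear information from $\ket{\Psi}\in\text{relint}(C_y)$. Because $0\in C_y$, the affine hull of $C_y$ equals the linear span $V\coloneqq\text{span}_{\mathbb{R}}(C_y)$, so for each $\ket{\eta}\in V$ there is $\epsilon>0$ with $\ket{\Psi}\pm\epsilon\ket{\eta}\in C_y$. Feasibility of $\ket{y'}$ then forces $\mu\pm\epsilon\Re(\braket{\eta}{y})=\Re(\braket{\Psi\pm\epsilon\eta}{y})\geq\Re(\braket{\Psi\pm\epsilon\eta}{y'})=\mu\pm\epsilon\Re(\braket{\eta}{y'})$, hence $\Re(\braket{\eta}{y-y'})=0$ for every $\ket{\eta}\in V$. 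Applying this to the Gaussian states in $C_y=\text{cone}\{\ket{s}\in G:\braket{s}{y}=1\}$, each of which lies in $V$, gives $\Re(\braket{s}{y'})=1$; since $\ket{y'}\in M_G$ gives $\abs{\braket{s}{y'}}\leq1$, we conclude $\braket{s}{y'}=1$, so $\{\ket{s}\in G:\braket{s}{y}=1\}\subseteq\{\ket{s}\in G:\braket{s}{y'}=1\}$, i.e.\ $C_y\subseteq C_{y'}$.

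To finish I would invoke the standard correspondence between the faces of a compact convex body and the relative interiors of its normal cones (as in Rockafellar's \emph{Convex Analysis}): for $\ket{y}\in M_G$ one has $\ket{\Psi}\in\text{relint}(C_y)$ if and only if $\ket{y}$ lies in the relative interior of the exposed face $F_\Psi$. Since $\ket{y}$ is an extreme point of $M_G$, the only face of $M_G$ containing $\ket{y}$ in its relative interior is $\{\ket{y}\}$ itself, because any face of dimension at least $1$ with $\ket{y}$ in its relative interior would present $\ket{y}$ as a nontrivial convex combination of two points of $M_G$. Hence $F_\Psi=\{\ket{y}\}$, which contradicts $\ket{y'}\in F_\Psi$ and completes the argument.

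The routine parts are the contradiction setup and the perturbation computation, which already yields $\ket{y}-\ket{y'}\perp_{\mathbb{R}}V$ and $C_y\subseteq C_{y'}$; the essential point, and the only place genuine care is needed, is this last convex-analytic input linking extreme points, their minimal faces, and relative interiors of normal cones. One cannot shortcut it by concluding $\ket{y}=\ket{y'}$ directly from the orthogonality relation, since $V$ need not span $\mathcal{H}_n$; and a naive self-contained version via feasible directions of $M_G$ at $\ket{y}$ does not go through cleanly either, because $M_G$ has a curved boundary, so the tangent cone at $\ket{y}$ need not be polyhedral and feasible directions cannot be used interchangeably with their limits.
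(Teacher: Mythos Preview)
Your perturbation argument in the middle paragraph is correct and genuinely useful: assuming $\ket{\Psi}\in\text{relint}(C_y)\cap C_{y'}$ you cleanly derive $A_y\subseteq A_{y'}$, hence $C_y\subseteq C_{y'}$. The gap is in the final step. The biconditional you attribute to Rockafellar, namely that $\ket{\Psi}\in\text{relint}(C_y)$ iff $\ket{y}\in\text{relint}(F_\Psi)$, is \emph{false} for general compact convex bodies when $\ket{y}$ is a non-exposed extreme point. Take $K\subset\mathbb{R}^2$ to be the convex hull of the unit disk and an external point $p$; the tangency point $t$ where the boundary of the disk meets the straight segment to $p$ is extreme but not exposed. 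The normal cone $N_K(t)$ is the single ray orthogonal to that tangent segment, and for any $\Psi$ in its relative interior the exposed face $F_\Psi$ is the whole segment $[t,p]$, so $t\notin\text{relint}(F_\Psi)$. Worse, with $y=t$ and $y'=p$ one has $\Psi\in\text{relint}(C_y)\cap C_{y'}$, so the very statement of the proposition fails for this $K$. A purely convex-analytic argument therefore cannot succeed: some structural input about $M_G$ is unavoidable.

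That structural input is precisely what the paper supplies. It proves, using the Gaussian-specific set $\overline{A}_y=\{U X_iX_jU^\dag\ket{s}:\ket{s}=U\ket{0}\in A_y\}$ and the fact that $\text{span}(A_y\cup\overline{A}_y)=\mathcal{H}_n$ for extreme $\ket{y}$, that $C_y\not\subseteq C_{y'}$ for any distinct $\ket{y},\ket{y'}\in E$ (this is exactly what fails in the ice-cream-cone example and would amount to showing every extreme point of $M_G$ is exposed). From there the paper finishes with an elementary relative-interior argument: pick $\ket{z}\in C_y\setminus C_{y'}$, write $\ket{\Psi}=\lambda\ket{w}+(1-\lambda)\ket{z}$ with $\ket{w}\in C_y$, and compare $\Re\braket{\Psi}{y}$ with $\Re\braket{\Psi}{y'}$. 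Note that your correct derivation of $C_y\subseteq C_{y'}$ would itself finish the proof once combined with the paper's $C_y\not\subseteq C_{y'}$; what is missing from your write-up is any substitute for that Gaussian-specific lemma.
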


Before proceeding with the proof, we will need some technical Lemmas:

\begin{lem}\label{lem:extreme_span}
    Let $\ket{y} \in M_G$. 
    Define the sets:
    $$
    A_y = \left\{ \ket{s} \in G : \braket{s}{y} = 1 \right\} \quad \overline{A}_y = \left\{ \ket{t} = U X_i X_j U^\dag \ket{s} : \ket{s} = U \ket{0} \in A_y ; i \neq j \right\}
    $$
    Then $\ket{y} \in E \implies \text{span}(A_y \cup \overline{A}_y) = \mathcal{H}_n$.
\end{lem}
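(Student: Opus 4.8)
Our plan is to prove the contrapositive: if $\text{span}(A_y\cup\overline{A}_y)\subsetneq\mathcal{H}_n$, then $\ket{y}$ is not an extreme point of $M_G$. Fix a nonzero $\ket{u}\in\mathcal{H}_n$ with $\ket{u}\perp(A_y\cup\overline{A}_y)$; we will show that $\ket{y}\pm\epsilon\ket{u}\in M_G$ for all sufficiently small $\epsilon>0$, which contradicts Definition~\ref{def:extreme} (with the witness $\epsilon\ket{u}$). Throughout let $\hat{A}_y=\{\ket{s}\in G:\abs{\braket{s}{y}}=1\}$, the set of maximisers of $\abs{\braket{\cdot}{y}}^2$ on $G$; it is the union over $\ket{s_0}\in A_y$ of the phase circle $\{e^{i\theta}\ket{s_0}\}$, so $\text{span}(\hat A_y)=\text{span}(A_y)$, and it is compact (a closed subset of the compact manifold $G$).

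The structural input is a tangent-space computation. Differentiating the curves $t\mapsto e^{itH}\ket{0}$ with $H$ a quadratic Majorana Hamiltonian shows $T_{\ket{0}}G=\text{span}\bigl(\{\ket{0}\}\cup\{\ket{x}:\abs{x}=2\}\bigr)=\text{span}\bigl(\{\ket{0}\}\cup\{X_iX_j\ket{0}:i\neq j\}\bigr)$, since the "number" terms $c_{2a-1}c_{2a}$ fix $\ket{0}$ while the remaining quadratics send it to weight-$2$ basis vectors (this dimension count is consistent with Proposition~\ref{prop:constraints}). Transporting by $U$, for $\ket{s_0}=U\ket{0}\in A_y$ we get $T_{s_0}G=\text{span}\bigl(\{\ket{s_0}\}\cup\{UX_iX_jU^\dagger\ket{s_0}\}\bigr)\subseteq\text{span}(A_y\cup\overline{A}_y)$, and the same holds at $e^{i\theta}\ket{s_0}$. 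Hence $\ket{u}\perp T_{s}G$ for every $\ket{s}\in\hat A_y$, so the linear functional $F(s)=\braket{s}{u}$ restricted to $G$ vanishes on $\hat A_y$ together with its differential there. By smoothness of $G$ and compactness of $\hat A_y$ this yields a neighbourhood $N$ of $\hat A_y$ in $G$ and a constant $C$ with $\abs{\braket{s}{u}}\le C\,d(s,\hat A_y)^2$ for all $\ket{s}\in N$.

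Now split $G=N\sqcup(G\setminus N)$. On $G\setminus N$, which is compact and disjoint from $\hat A_y$, we have $\sup\abs{\braket{s}{y}}^2=1-\delta_0<1$, so $\abs{\braket{s}{y\pm\epsilon u}}\le\sqrt{1-\delta_0}+\epsilon\norm{\ket{u}}_2<1$ once $\epsilon$ is small. On $N$, using $\abs{\braket{s}{y}}\le 1$ and the bound above,
$$\abs{\braket{s}{y\pm\epsilon u}}\le\abs{\braket{s}{y}}+\epsilon\abs{\braket{s}{u}}\le\abs{\braket{s}{y}}+\epsilon C\,d(s,\hat A_y)^2,$$
so it suffices to show $\abs{\braket{s}{y}}\le 1-c\,d(s,\hat A_y)^2$ on $N$ for some $c>0$; then any $\epsilon<c/C$ completes the argument.

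The main obstacle is exactly this last estimate: that $1-\abs{\braket{s}{y}}^2$ decays at least quadratically in $d(s,\hat A_y)$, i.e. that $\hat A_y$ is a non-degenerate (Morse–Bott) critical submanifold of $\abs{\braket{\cdot}{y}}^2$ on $G$ with negative-definite transverse Hessian. We would establish this by first arguing that $\hat A_y$ is a smooth compact submanifold — it is the preimage of the maximal value of the real-analytic function $\abs{\braket{\cdot}{y}}^2$ on the analytic manifold $G$, which is regular along $\hat A_y$ — and then analysing the transverse Hessian directly: a direction in which it degenerates would, by applying the \L{}ojasiewicz inequality along a transverse analytic curve, either be tangent to $\hat A_y$ (a contradiction) or force a higher-order flatness that must be ruled out using the orbit structure of $G$. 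Obtaining the constant $c$ uniformly over $\hat A_y$, and handling the case where $A_y$ is not reduced, is where we expect the real work to lie; the remainder is routine compactness together with the tangent-space identification above.
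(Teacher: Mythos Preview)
Your contrapositive set-up and the tangent-space identification $T_{s_0}G=\text{span}\bigl(\{\ket{s_0}\}\cup\{UX_iX_jU^\dagger\ket{s_0}\}\bigr)$ are correct, and the quadratic bound $\abs{\braket{s}{u}}\le C\,d(s,\hat A_y)^2$ does follow from the vanishing of $\braket{\cdot}{u}$ and its differential on $\hat A_y$. The genuine gap is the matching lower bound $1-\abs{\braket{s}{y}}\ge c\,d(s,\hat A_y)^2$, which you yourself flag as unfinished. Your justification that $\hat A_y$ is smooth (``preimage of the maximal value\ldots which is regular'') is invalid, since a maximum is a critical value and the maximal level set of a real-analytic function can certainly be singular. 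More importantly, even granting smoothness, the transverse Hessian of $\abs{\braket{\cdot}{y}}^2$ on $G$ picks up the second fundamental form of $G$ in $\mathcal{H}_n$ and hence components of $\ket{y}$ lying outside $T_{s_0}G$; its non-degeneracy is a genuine condition on $\ket{y}$, not a consequence of the first-order data you have assembled. The \L ojasiewicz inequality only guarantees $1-h\ge c\,d^\alpha$ for \emph{some} exponent $\alpha$, and if $\alpha>2$ your comparison $\epsilon C d^2 < c\,d^\alpha$ fails near $\hat A_y$. The appeal to ``orbit structure of $G$'' is too vague to close this.

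The paper's proof sidesteps all of this with an elementary algebraic argument. Its key step is to show directly that $\braket{t}{y}=0$ for every $\ket{t}\in\overline A_y$, by applying feasibility of $\ket{y}$ to the Gaussian family $\sqrt{\alpha}\ket{s}+\sqrt{1-\alpha}e^{-i\phi}\ket{t}$ and optimising over $\alpha,\phi$. With this in hand, an arbitrary Gaussian $\ket{\psi}$ is decomposed as $a_s\ket{s}+a_t\ket{t}+a_u\ket{u}$ relative to $\text{span}(A_y)$, $\text{span}(\overline A_y)$, and their orthogonal complement, and $\abs{\braket{\psi}{y\pm\epsilon u_1}}$ is bounded using only $\braket{t}{y}=0$ together with the strict inequality $\sup_{\ket{u}}\abs{\braket{u}{y}}<1$. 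No smoothness of $\hat A_y$, no Hessian computation, and no uniform Morse--Bott constants are required.
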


$A_y$ is the set of states which form a basis for $C_y$ as defined by~\autoref{eq:cone_def_2}.
We now motivate the rather strange definition $\overline{A}_y$. 

Hamming distance plays a key role in the definition of Gaussian states; computational basis states with distance at least 4 are constrained relative to one another.
Thus $\overline{A}_y$ in some sense encodes states $\ket{t}$ which are unconstrained with respect to some $\ket{s} \in A_y$. By this we mean that $\sqrt{\alpha} \ket{s} + \sqrt{1 -\alpha} \ket{t} \in G$ for any $0 \leq \alpha \leq 1$.

Conversely, any $\ket{u} \in \text{span}(A_y \cup \overline{A}_y)^\perp$ has a non-trivial constraint with respect to every $\ket{s} \in A_y$. The best we can do is $\alpha \ket{s} + (1-\alpha)\ket{t} + \{\text{balancing terms}\}$. 

In particular, we have $\abs{a_s}^2 + \abs{a_u}^2 < 1$ where $a_s$ is the part of a Gaussian state in $\text{span}(A_y)$ and $a_u$ the part in $ \text{span}(A_y \cup \overline{A}_y)^\perp$.

\begin{proof}
    Suppose $A_y \cup \overline{A}_y$ does \textit{not} span $\mathcal{H}_n$. 
    
    Let $S = \left\{ \ket{s_i} \right\}$ be a basis for $A_y$, $T = \left\{ \ket{t_i} \right\}$ a basis for $\overline{A}_y$ and $U = \left\{ \ket{u_i} \right\}$ a basis for $\text{span}(A_y \cup \overline{A}_y)^\perp$.

    For any $\ket{t} \in \overline{A}_y$, with corresponding element $\ket{s} \in A_y$, note that 
    \begin{align*}
        \ket{\psi} &\coloneqq \sqrt{\alpha}\ket{s} + \sqrt{1 - \alpha}e^{-i\phi} \ket{t} \\
        &= U\left(\sqrt{\alpha}\ket{0} + \sqrt{1 - \alpha}e^{-i\phi} X_i X_j\ket{0}\right) \in G
    \end{align*} 
    for any $0 \leq \alpha \leq 1,\, 0 \leq \phi < 2\pi$. 
    
    Then $\ket{y} \in M_G$ requires $\abs{\braket{\psi}{y}}^2 = \abs{\sqrt{\alpha} + \sqrt{1-\alpha}e^{-i\phi}\braket{t}{y}}^2 \leq 1$. 
    Taking $\alpha = 1/(1 + \abs{\braket{t}{y}}^2)$ and $\phi =\text{phase}(\braket{t}{y})$ we see that $\braket{t}{y} = 0$.
    
    Consider $\ket{y_\pm} = \ket{y} \pm \epsilon \ket{u_1}$. We show that these points are feasible for some $\epsilon > 0$, and so $\ket{y}$ is a proper convex combination of feasible points.
    
    First, for any $\ket{\psi} \in G$, first note that $\braket{\psi}{u_1} = 0 \implies \abs{\braket{\psi}{y_\pm}}^2 = \abs{\braket{\psi}{y}}^2 \leq 1$.

    For $\braket{\psi}{u_1} \neq 0$, write $\ket{\psi} = a_s \ket{s} + a_t \ket{t} + a_u \ket{u}$ where $\ket{s} \in \text{span}(A_y)$ etc. Then, noting $\braket{s}{y} = 1,\, \braket{t}{y} = 0$ and $\braket{s}{u} = \braket{t}{u} = 0$, we have:
    \begin{align*}
        \abs{\braket{\psi}{y_\pm}}^2 &= \abs{a_s + a_u \braket{u}{y} \pm \epsilon a_u \braket{u}{u_1}}^2\\
        &\leq \abs{a_s}^2 + \abs{a_u}^2(\abs{\braket{u}{y}}^2 + \epsilon^2 \abs{\braket{u}{u_1}}^2)
    \end{align*}

    If we also write $\ket{y} = b_{s'} \ket{s'} + b_{t'}\ket{t'} + b_{u'} \ket{u'}$, it is clear that $\sup_{\ket{u}}(\abs{\braket{u}{y}}^2) = \abs{b_{u'}}^2$. Moreover, $\abs{b_{u'}} < 1$ else $\braket{u'}{y} = 1$, implying $\ket{u'} \in A_y \cap \text{span}(A_y \cup \overline{A}_y)^\perp$, which is impossible.

    Noting $\abs{\braket{u}{u_1}}^2 \leq 1$, we can therefore take $0 < \epsilon < \sqrt{1 - \abs{b_u}^2}$ to obtain
    $$ \abs{\braket{\psi}{y_\pm}}^2 < \abs{a_s}^2 + \abs{a_u}^2 < 1$$
    So we have $F_G(\ket{y_\pm}) \leq 1$. Thus these are feasible points, and the proper convex combination $\ket{y} = \frac{1}{2}(\ket{y_+} + \ket{y_-})$ implies $\ket{y} \notin E$. 
\end{proof}

\begin{lem} \label{lem:cone_strict_subset}
    For distinct $\ket{y}, \, \ket{y'} \in E$,
    $C_y \not\subset C_{y'}$, where the subset is strict.
\end{lem}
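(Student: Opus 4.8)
\emph{Proof plan.} The plan is to argue by contradiction: assume $\ket{y},\ket{y'}\in E$ are distinct and $C_y\subsetneq C_{y'}$ (so in particular $C_y\subseteq C_{y'}$), and derive $\ket{y}=\ket{y'}$. The first step is to upgrade the cone containment $C_y\subseteq C_{y'}$ to the set containment $A_y\subseteq A_{y'}$, where $A_y=\{\ket{s}\in G:\braket{s}{y}=1\}$ is the generating set appearing in Equation~\eqref{eq:cone_def_2}. The key observation is that any normalised Gaussian state $\ket{s}$ is itself feasible (by Cauchy--Schwarz, $\abs{\braket{s}{t}}^2\le 1$ for all $\ket{t}\in G$, so $\ket{s}\in M_G$) and is its own optimal dual witness: $\max_{p\in M_G}\Re\braket{s}{p}=\Re\braket{s}{s}=1$, since every feasible $p$ obeys $\Re\braket{s}{p}\le\abs{\braket{s}{p}}\le 1$. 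Hence $\ket{s}\in C_{y'}$ forces $\Re\braket{s}{y'}=1$, and together with $\abs{\braket{s}{y'}}\le 1$ (feasibility of $\ket{y'}$) this gives $\braket{s}{y'}=1$. Since $A_y\subseteq C_y\subseteq C_{y'}$, every $\ket{s}\in A_y$ therefore lies in $A_{y'}$.

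Next I would show $\ket{y}-\ket{y'}\perp\operatorname{span}(A_y\cup\overline{A}_y)$. For $\ket{s}\in A_y$ we have $\braket{s}{y}=\braket{s}{y'}=1$, so $\braket{s}{y-y'}=0$. For $\ket{t}\in\overline{A}_y$ with associated $\ket{s}=U\ket{0}\in A_y$, the family $\sqrt{\alpha}\ket{s}+\sqrt{1-\alpha}e^{-i\phi}\ket{t}=U(\sqrt{\alpha}\ket{0}+\sqrt{1-\alpha}e^{-i\phi}X_iX_j\ket{0})$ lies in $G$ for all $0\le\alpha\le1$, $0\le\phi<2\pi$; feasibility of $\ket{y}$ forces $\braket{t}{y}=0$ exactly as in the proof of Lemma~\ref{lem:extreme_span} (optimise over $\alpha,\phi$). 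Running the identical argument with $\ket{y'}$ in place of $\ket{y}$, and using $\braket{s}{y'}=1$ from the first step, forces $\braket{t}{y'}=0$. Hence $\braket{t}{y-y'}=0$ as well.

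Finally, since $\ket{y}\in E$, Lemma~\ref{lem:extreme_span} gives $\operatorname{span}(A_y\cup\overline{A}_y)=\mathcal{H}_n$, so $\ket{y}-\ket{y'}=0$, contradicting that $\ket{y}\neq\ket{y'}$; thus $C_y\subsetneq C_{y'}$ is impossible. The main obstacle is really the first step: recognising that cone containment translates into the honest set containment $A_y\subseteq A_{y'}$, which hinges on the (otherwise mild) fact that a normalised Gaussian state witnesses its own extent with value $1$. Everything after that is bookkeeping with the orthogonality relations already extracted in the proof of Lemma~\ref{lem:extreme_span}, now applied simultaneously to $\ket{y}$ and $\ket{y'}$.
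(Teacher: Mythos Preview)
Your proof is correct and rests on the same pillars as the paper's: Lemma~\ref{lem:extreme_span} together with the orthogonality $\braket{t}{y}=0$ for $\ket{t}\in\overline{A}_y$, applied simultaneously to $\ket{y}$ and $\ket{y'}$. The packaging differs slightly. The paper assumes $C_y\subsetneq C_{y'}$, takes the induced inclusion $A_y\subsetneq A_{y'}$ for granted, picks a witness $\ket{s_M}\in A_{y'}\setminus A_y$, decomposes it along $\operatorname{span}(A_y)\oplus\operatorname{span}(\overline{A}_y)$, and shows $\braket{y}{s_M}=1$, a contradiction. You instead show directly that $\ket{y}-\ket{y'}$ is orthogonal to all of $A_y\cup\overline{A}_y$ and invoke Lemma~\ref{lem:extreme_span} to conclude $\ket{y}=\ket{y'}$. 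Two small advantages of your route: (i) your first step --- upgrading $C_y\subseteq C_{y'}$ to $A_y\subseteq A_{y'}$ via the observation that a normalised Gaussian $\ket{s}$ is its own optimal dual witness --- is a genuine justification of something the paper simply asserts; and (ii) your argument never uses strictness, so it simultaneously proves Lemma~\ref{lem:cone_equal}.
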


\begin{proof}    
    Define bases for the sets $A_y$ and $A_{y'}$:
    $$
    S_y \coloneqq \left\{ \ket{s_i} : i = 1,\ldots,m \right\} \subset \left\{ \ket{s_i} : i = 1,\ldots,M \right\} \eqqcolon S_y'
    $$
    where $M > m$. Since $\ket{y} \in E$, we must have $\text{span}(A_y)\, \oplus \, \text{span}(A_{y'}) = \text{span}(A_y \cup \overline{A}_y) = \mathcal{H}_n$.

    Consider $\ket{s_M} \in S_{y'} \setminus S_y$. We may write $\ket{s_M} = \ket{s} + \ket{t}$ for $\ket{s} \in \text{span}(A_y),\, \ket{t} \in \text{span}(\overline{A}_y)$. As in the proof of~\autoref{lem:extreme_span}, we have that $\braket{t}{y} = 0$. Noting that $\overline{A}_y \subseteq \overline{A}_{y'}$, we also have $\braket{t}{y'} = 0$. Also, since $A_y \subset A_{y'}$, we have $\braket{y}{s} = \braket{y'}{s}$.

    Then since $\ket{s_M} \in A_{y'}$, $1 = \braket{y'}{s_M} = \braket{y'}{s} + \braket{y'}{t} = \braket{y'}{s}$. But then $\braket{y}{s_M} = \braket{y}{s} + \braket{y}{t} = \braket{y}{s} = \braket{y'}{s} = 1$, which implies $\ket{s_M} \in A_y$, which is a contradiction.
\end{proof}

\begin{lem}\label{lem:cone_equal}
    For distinct $\ket{y},\,\ket{y'} \in E$, $C_y \neq C_{y'}$.
\end{lem}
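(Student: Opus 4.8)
The plan is to argue by contradiction: suppose $C_y = C_{y'}$ for distinct extreme points $\ket{y},\ket{y'}\in E$, and deduce $\ket{y}=\ket{y'}$. The first step is to observe that the cone $C_y$ already determines the set $A_y = \{\ket{s}\in G : \braket{s}{y}=1\}$; concretely, I claim $A_y = G\cap C_y$. Indeed, for a Gaussian state $\ket{s}$ one has $\sqrt{\xi_G(\ket{s})}=1$ (the one-term decomposition is optimal, since $\sqrt{\xi_G(\ket{\psi})}\geq\norm{\ket{\psi}}$ for any $\ket{\psi}$), so by the ``$\max$'' characterisation of the normal cone together with strong duality, $\ket{s}\in C_y$ iff $\Re\braket{s}{y}=1$. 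Since $\ket{y}\in M_G$ forces $\abs{\braket{s}{y}}\leq\sqrt{F_G(\ket{y})}\leq 1$, the equality $\Re\braket{s}{y}=1$ already implies $\braket{s}{y}=1$, i.e. $\ket{s}\in A_y$; the converse is immediate. Hence $C_y=C_{y'}$ gives $A_y=A_{y'}$, and since $\overline{A}_y$ is built purely from the set $A_y$ (ranging over all factorisations $\ket{s}=U\ket{0}\in A_y$ and all $i\neq j$), also $\overline{A}_y=\overline{A}_{y'}$.

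Next I would reuse the orthogonality computation already carried out inside the proof of Lemma \ref{lem:extreme_span}: for \emph{any} feasible $\ket{z}\in M_G$ and any $\ket{s}\in A_z$, the associated state $\ket{t}=UX_iX_jU^\dagger\ket{s}\in\overline{A}_z$ satisfies $\braket{t}{z}=0$ (one optimises $\abs{\sqrt{\alpha}+\sqrt{1-\alpha}e^{-i\phi}\braket{t}{z}}^2\leq 1$ over $\alpha,\phi$). Applying this once with $\ket{z}=\ket{y}$ and once with $\ket{z}=\ket{y'}$, and using $A_y=A_{y'}$ together with $\overline{A}_y=\overline{A}_{y'}$, we obtain $\braket{t}{y}=\braket{t}{y'}=0$ for every $\ket{t}\in\overline{A}_y$. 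We also have $\braket{s}{y}=\braket{s}{y'}=1$ for every $\ket{s}\in A_y$. Therefore $\ket{y}-\ket{y'}$ is orthogonal to every element of $A_y\cup\overline{A}_y$.

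Finally, since $\ket{y}\in E$, Lemma \ref{lem:extreme_span} gives $\mathrm{span}(A_y\cup\overline{A}_y)=\mathcal{H}_n$, so any vector orthogonal to all of $A_y\cup\overline{A}_y$ must vanish, whence $\ket{y}=\ket{y'}$, contradicting distinctness. (Together with Lemma \ref{lem:cone_strict_subset}, which rules out either cone being a proper subset of the other, this shows the normal cones of distinct extreme points are genuinely incomparable.) I expect the one delicate point to be the first step: pinning down that for Gaussian inputs membership in the normal cone is equivalent to unit overlap with $\ket{y}$ — which leans on $\xi_G$ being trivial on Gaussians and on the feasibility bound — and checking that $A_y=A_{y'}\Rightarrow\overline{A}_y=\overline{A}_{y'}$ despite the apparent $U$-dependence in the definition of $\overline{A}$; the rest is linear algebra on the spanning set from Lemma \ref{lem:extreme_span}.
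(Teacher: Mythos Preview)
Your proposal is correct and follows essentially the same route as the paper's proof: establish $C_y=C_{y'}\Rightarrow A_y=A_{y'}\Rightarrow\overline{A}_y=\overline{A}_{y'}$, then use that $\braket{s}{y}=1$ on $A_y$, $\braket{t}{y}=0$ on $\overline{A}_y$, and the spanning property of Lemma~\ref{lem:extreme_span} to force $\ket{y}=\ket{y'}$. The paper compresses this into a one-line implication chain; your version adds the explicit identification $A_y=G\cap C_y$ via strong duality and the feasibility bound, which is a welcome justification of the step the paper leaves implicit.
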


\begin{proof}
    We have $\text{span}(A_y)\oplus \,\text{span}(\overline{A}_y) = \mathcal{H}_n$. We also have $1 = \braket{s}{y} \ \forall \, \ket{s} \in A_y$ and $0 = \braket{t}{y} \ \forall \, \ket{t} \in \overline{A}_y$. This system of equations uniquely determines $\ket{y}$. In particular:
    \begin{equation*}
         C_y = C_{y'} \implies A_y = A_{y'} \implies \overline{A}_y = \overline{A}_{y'} \implies y = y' \qedhere 
    \end{equation*}
\end{proof}

\begin{proof}[Proof of \autoref{prop:unique_cone}]
    Let $\ket{\Psi} \in C_y$.
    By \autoref{lem:cone_strict_subset} \& \autoref{lem:cone_equal}, we have $C_y \nsubseteq C_{y'}$. So $\exists \, \ket{z} \in C_y \setminus C_{y'}$. Then by the definition of relative interior for convex sets, we have
    $$
        \exists \, \ket{w} \in C_y, \, 0 < \lambda < 1 \quad \text{s.t.} \quad \ket{\Psi} = \lambda \ket{w} + (1-\lambda)\ket{z}
    $$
    We then have
    \begin{align*}
        \braket{ \Psi}{  y' }^R
        &= \lambda \braket{ w}{ y' }^R 
        + (1-\lambda) \braket{ z}{  y' }^R \\
        &< \lambda \braket{ w}{  y }^R 
        + (1-\lambda) \braket{ z}{  y }^R \\
        &= \braket{ \Psi}{  y }^R
    \end{align*}
    where the second line uses the facts
    \begin{align*}
        \braket{ z}{ y' }^R 
        &< \braket{ z}{ y }^R \\
        \braket{ w}{  y' }^R
        &\leq \braket{ w}{ y }^R
    \end{align*}
    since $\ket{w} \in C_y,\, \ket{z} \in  C_y \setminus C_{y'}$.
    Therefore, $\ket{\Psi} \notin C_{y'}$.
\end{proof}

\section{Symmetry-restricted Gaussian rank of 2 copies of a magic state}\label{sec:app.rank}

We give a sketch proof of \autoref{prop:rank} to give a flavour of our proof techniques.
We consider several cases, finding a contradiction in each case.
Recall that the equation we seek to solve is
\begin{equation} \label{eq:gaussian_decomp}
    \ket{M}\ket{M} = \frac{1}{2}(
        \ket{0} + \ket{15} + \ket{240} + \ket{255}
    )
    = \sum_{i = 1}^3 \sum_x a^i_x \ket{x}
\end{equation}
where the $\ket{x}$ are invariant under $Z_3 Z_4$ and $Z_7 Z_8$.

\subsection{\texorpdfstring{$a_0^i \neq 0,\,i=1,\,2,\,3$}{}}
Consider as illustrative examples the two equations each which are enforced on $\aa{15}{3}$ and $\aa{51}{3}$:
\begin{align}
    \aa{15}{3} = (\frac{1}{2}-\aa{15}{1}-\aa{15}{2}) &= \frac{\aa{12}{3} \aa{3}{3}}{\aa{0}{3}} \nonumber\\
    \implies \frac{1}{2}\aa{0}{3} &= 
    \frac{\aa{3}{1}}{\aa{0}{1}}(\frac{1}{2}\aa{12}{1} - \aa{12}{1}\aa{0}{2} + \aa{12}{2}\aa{0}{1}) +
    \frac{\aa{3}{2}}{\aa{0}{2}}(\frac{1}{2}\aa{12}{2} - \aa{12}{2}\aa{0}{1} + \aa{12}{1}\aa{0}{2}) \label{eq:a_15}\\
    \vspace{12pt} \nonumber
    \\
    \aa{51}{3} = -(\aa{51}{1}+\aa{51}{2}) &= \frac{\aa{48}{3} \aa{3}{3}}{\aa{0}{3}} \nonumber\\
    \implies 0 &= 
    \frac{\aa{3}{1}}{\aa{0}{1}}(\frac{1}{2}\aa{48}{1} - \aa{48}{1}\aa{0}{2} + \aa{48}{2}\aa{0}{1}) +
    \frac{\aa{3}{2}}{\aa{0}{2}}(\frac{1}{2}\aa{48}{2} - \aa{48}{2}\aa{0}{1} + \aa{48}{1}\aa{0}{2}) \label{eq:a_51}
\end{align}
\autoref{eq:a_15} \& \autoref{eq:a_51} are structurally very similar. 
Define a function which captures the terms in brackets:
$$
f^{i}_{j} \coloneqq \frac{1}{2}\aa{j}{i} +(-1)^i (\aa{0}{1}\aa{j}{1} - \aa{0}{2}\aa{j}{0})
$$
Then proceeding similarly for each other term in the body of~\autoref{tab:constraints_0}, we obtain the matrix equation:
\begin{gather}
    \begin{pmatrix}
    A \\ A'
    \end{pmatrix}
    \begin{pmatrix}
    X & X'
    \end{pmatrix}
    =
    \begin{pmatrix}
    \frac{1}{2}\aa{0}{3} I_3 & 0 \\
    0 & 0
    \end{pmatrix} \label{eq:matrix_0}
\end{gather}
\begin{alignat*}{2}
    A & = \begin{pmatrix}
        f^0_{12} & f^1_{12} \\
        f^0_{48} & f^1_{48} \\
        f^0_{60} & f^1_{60}
    \end{pmatrix} \quad
    & A' &= \begin{pmatrix}
        f^0_{20} & f^1_{20} \\
        f^0_{24} & f^1_{24} \\
        f^0_{36} & f^1_{36} \\
        f^0_{40} & f^1_{40} 
    \end{pmatrix}
    \\
    X &= \begin{pmatrix}
        \frac{\aa{3}{1}}{\aa{0}{1}} & \frac{\aa{192}{1}}{\aa{0}{1}} & \frac{\aa{195}{1}}{\aa{0}{1}} \\
        \frac{\aa{3}{2}}{\aa{0}{2}} & \frac{\aa{192}{2}}{\aa{0}{2}} & \frac{\aa{195}{2}}{\aa{0}{2}}
    \end{pmatrix} \quad
    & X' &= \begin{pmatrix}
        \frac{\aa{65}{1}}{\aa{0}{1}} & \frac{\aa{66}{1}}{\aa{0}{1}} & \frac{\aa{129}{1}}{\aa{0}{1}} & \frac{\aa{130}{1}}{\aa{0}{1}} \\
        \frac{\aa{65}{2}}{\aa{0}{2}} & \frac{\aa{66}{2}}{\aa{0}{2}} & \frac{\aa{129}{2}}{\aa{0}{2}} & \frac{\aa{130}{1}}{\aa{0}{1}}
    \end{pmatrix}
\end{alignat*}
Note that
$\text{rank}(AX) \leq \min(\text{rank}(A),\text{rank}(X))\leq 2$,
whereas 
$\text{rank}(\aa{0}{3} I_3) = 3$
since by assumption $\aa{0}{3} \neq 0$. Thus there is no solution to \autoref{eq:matrix_0}.

\subsection{\texorpdfstring{$a_0^2 = 0, \ a_0^i \neq 0$, $i = 0, \, 1$}{}}
By symmetry, without loss of generality choose to delete the first column of \autoref{tab:constraints_0} for $\ket{\psi_3}$, i.e. 
$$\aa{12}{3} = \aa{20}{3} = \aa{24}{3} = \aa{36}{3} = \aa{40}{3} =\aa{48}{3} = \aa{60}{3} = 0$$
Then also note that, by~\autoref{eq:gaussian_decomp},
$
    \aa{j}{2} = - \aa{j}{1} \ \ j = 12, \,20, \,24, \,36, \,40, \,48, \,60
$.

Then our special-case constraint gives:
\begin{equation} \label{eq:a_60}
    \aa{60}{1} = \frac{\aa{12}{1}\aa{48}{1} - \aa{20}{1}\aa{40}{1} + \aa{24}{1}\aa{36}{1}}{\aa{0}{1}} =
    - \frac{\aa{12}{2}\aa{48}{2} - \aa{20}{2}\aa{40}{2} + \aa{24}{2}\aa{36}{2}}{\aa{0}{2}} = -\aa{60}{2}
\end{equation}
Then either $\aa{0}{2} = -\aa{0}{1}$ or $\aa{60}{1} = \aa{60}{2} = 0$.

In the first case, we immediately have $\sum_i \aa{0}{i}  =0$, which is not possible by~\autoref{eq:gaussian_decomp}.

In the second case, since $\aa{0}{1},\,\aa{0}{2} \neq 0$, we have 
\begin{align*}
    \implies \aa{j}{1} = \aa{j}{2} = 0, \, j &= 63, 125, \, 126, \, 189, \, 190,\, 252, \, 255 \ \  
    (\text{since} \ \aa{j}{1} \propto \aa{60}{1}, \aa{j}{2} \propto \aa{60}{2}) \\
    \implies \aa{j}{3} = 0, \, j &= 63, 125, \, 126, \, 189, \, 190,\, 252 \ \ (\text{since} \ \sum_i\aa{j}{i} = 0)
\end{align*}
Consider $\aa{255}{3}$. Since $\aa{255}{1} = \aa{255}{2} = 0$ we must have $\aa{255}{3} \neq 0$.
Since the rest of the row of 255 is 0 for $\ket{\psi_3}$, it must be the case that only the last column is non-zero for $\ket{\psi_3}$.

Finally, note that 
\begin{align*}
    \sum_i \aa{15}{i} = \aa{12}{0}\left(
        \frac{\aa{3}{0}}{\aa{0}{0}} 
        - \frac{\aa{3}{1}}{\aa{0}{1}}
    \right) \neq 0 
    \\
    \sum_i \aa{51}{i} = \aa{48}{0}\left(
        \frac{\aa{3}{0}}{\aa{0}{0}} 
        - \frac{\aa{3}{1}}{\aa{0}{1}}
    \right) = 0
\end{align*}
Clearly we must have $\aa{48}{0} = \aa{48}{1} = 0 $.
But then $\aa{240}{0} = \aa{240}{1} = 0$, so $\sum_i \aa{240}{i} = 0$, contradicting~\autoref{eq:gaussian_decomp}.
\subsection{\texorpdfstring{$a_0^1 = a_0^2 = 0, \ a_0^0 \neq 0$}{}}
This case spawns several sub-cases depending on which column of \autoref{tab:constraints_0} we delete for each of $\ket{\psi_2}$ \& $\ket{\psi_3}$. 

In each case, the reasoning follows much as the above, and we exclude the details.
\section{Code repository} \label{sec:app.code}
Code is available on GitHub \href{https://github.com/JoshCudby/GaussianDecomposition}{here}.
\end{document}